\newtheorem{Def}{Definition}[section]
\newtheorem{Thm}[Def]{Theorem}
\newtheorem{Lem}[Def]{Lemma}
\newtheorem{Assumption}[Def]{Assumption}
\newtheorem{Rem}[Def]{Remark}
\numberwithin{equation}{section}
\newcommand{\argmin}{\operatornamewithlimits {argmin}}
\newcommand{\mcc}{\mathcal{C}}
\newcommand{\mcd}{\mathcal{D}}
\newcommand{\mcf}{\mathcal{F}}
\newcommand{\mch}{\mathcal{H}}
\newcommand{\mcj}{\mathcal{J}}
\newcommand{\mcl}{\mathcal{L}}
\newcommand{\mbba}{\mathbb{A}}
\newcommand{\mbbb}{\mathbb{B}}
\newcommand{\mbbh}{\mathbb{H}}
\newcommand{\mbbn}{\mathbb{N}}
\newcommand{\mbbr}{\mathbb{R}}
\newcommand{\mbbi}{\mathbbm{1}} 
\newcommand{\al}{\alpha}
\newcommand{\del}{\delta}
\newcommand{\ep}{\epsilon}
\newcommand{\D}{\Delta}
\newcommand{\Sig}{\Sigma}
\newcommand{\lam}{\lambda}
\newcommand{\gam}{\gamma}
\newcommand{\Gam}{\Gamma}
\newcommand{\p}{\partial}
\newcommand{\cil}{\xrightarrow{\mcl}} 
\newcommand{\cip}{\xrightarrow{p}} 
\newcommand{\argmax}{\mathop{\rm argmax}}
\newcommand{\diag}{\mathop{\rm diag}}
\newcommand{\lf}{\lfloor}
\newcommand{\rf}{\rfloor}
\def\ds#1{\displaystyle{#1}}
\def\nn{\nonumber}
\def\cadlag{c\`adl\`ag}
\def\wp{Wiener process}
\def\cpp{compound Poisson process}
\def\sumj{\sum_{j=1}^{n}}
\def\intj{\int_{t_{j-1}}^{t_j}}
\def\tz{\theta_{0}}
\def\tes{\hat{\theta}_{n}}
\def\aes{\hat{\alpha}_{n}}
\def\bes{\hat{\beta}_{n}}
\def\Xc{X^{\mathrm{cont}}}
\title[Estimating diffusion with compound Poisson jumps]
{Estimating diffusion with compound Poisson jumps based on self-normalized residuals}
\date{\today}
\keywords{Diffusion with compound Poisson jumps, high-frequency sampling, detection of jumps, self-normalized residuals.}
\author{Hiroki Masuda}
\address{Faculty of Mathematics, Kyushu University, Motooka 744 Nishi-ku Fukuoka 819-0395, Japan}
\email{hiroki@math.kyushu-u.ac.jp}
\author{Yuma Uehara}
\address{The Institute of Statistical Mathematics, 10-3 Midori-cho, Tachikawa, Tokyo 190-8562, Japan}
\email{y-uehara@ism.ac.jp}
\begin{document}
\setlength{\baselineskip}{4.5mm}

\maketitle

\begin{abstract}
We consider parametric estimation of the continuous part of a class of ergodic diffusions with jumps based on high-frequency samples.
Various papers previously proposed threshold based methods, which enable us to distinguish whether observed increments have jumps or not at each small-time interval, hence to estimate the unknown parameters separately. However, a data-adapted and quantitative choice of the threshold parameter is known to be a subtle and sensitive problem.
In this paper, we present a simple alternative based on the Jarque-Bera normality test for the Euler residuals.
Different from the threshold based method, the proposed method does not require any sensitive fine tuning, hence is of practical value.
It is shown that under suitable conditions the proposed estimator is asymptotically equivalent to an estimator constructed by the unobserved fluctuation of the continuous part of the solution process, hence is asymptotically efficient.
Some numerical experiments are conducted to observe finite-sample performance of the proposed method.
\end{abstract}

\section{Introduction}

Consider the following one-dimensional stochastic differential equation with jumps:
\begin{equation}
dX_t=\left( \sum_{l=1}^{p_\al} \al^{(l)} a^{(l)}(X_t) \right)^{1/2}dw_t+\sum_{k=1}^{p_\beta} \beta^{(k)} b^{(k)}(X_t)dt+c(X_{t-})dJ_t,
\label{hm:sde}
\end{equation}
defined on a complete filtered probability space $(\Omega, \mcf, (\mcf_t)_{t\geq0},P)$.
The ingredients are as follows:
\begin{itemize}
\item The coefficients $\{a^{(l)}(x)\}_{l=1}^{p_\al}$ and $\{b^{(k)}(x)\}_{k=1}^{p_\beta}$ are known measurable functions;
\item The statistical parameter
\begin{equation}
\theta:=(\al,\beta)\in \Theta_{\al} \times \Theta_{\beta}=\Theta
\nonumber
\end{equation}
are unknown, where $\Theta_\al$ and $\Theta_\beta$ are bounded convex domains and subset of $\mbbr^{p_\al}$ and $\mbbr^{p_\beta}$, respectively;
\item $w$ is a standard {\wp} and $J$ a {\cpp} with intensity parameter $\lam\in[0,\infty)$ and i.i.d jump-size random variables $\{\xi_i\}_{i\in\mbbn}$, that is,
\begin{equation*}
J_t=\sum_{i=1}^{N_t} \xi_i;
\end{equation*}
\item $(w,J)$ is $\mcf_t$-adapted, and the initial variable $X_0$ is $\mcf_0$-adapted and independent of $(w,J)$.
\end{itemize}
Throughout this paper, we assume that there exists a true value $\tz:=(\al_0,\beta_0)\in\Theta$.
We want to estimate $\tz$ based on a discrete-time but high-frequency observation $(X_{t^{n}_{j}})_{j=0}^{n}$ from a solution to \eqref{hm:sde}, where the sampling times are supposed to be equally spaced:
\begin{equation}
t^{n}_{j} =jh_{n}
\nn
\end{equation}
for a positive sequence $(h_n)$ such that $h_n \to 0$ and the terminal sampling time $T_{n}:=t^{n}_{n}=nh_n\to\infty$.
Throughout we suppose that $\lam>0$; for diffusion models, many estimator of $\theta$ have been proposed, such as Gaussian quasi-likelihood estimator \cite{Kes97}, adaptive estimator \cite{UchYos12}, multi-step estimator \cite{KamUch15}, to mention few.
The special forms of the coefficients of \eqref{hm:sde} may seem restrictive. However, we are particularly interested in models which can be estimated without heavy computational effort.
As will be mentioned in Section \ref{Asymptotic Results}, we do not need any numerical search of a maximizer to estimate $\theta$ as good as virtual situation where we know every jump instances over $(0,T_n]$.

In the presence of the jump component, elimination of the effect of $J$ is crucial for a reliable estimation of $\theta$. A well-known approach for it is the threshold based method independently proposed in \cite{Man04} and \cite{ShiYos06}; see also \cite{OgiYos11} for subsequent developments. In the method, we look at sizes of the increments
\begin{equation}
\D^n_{j}X=\D_{j}X:=X_{t^n_{j}}-X_{t^n_{j-1}}
\nonumber
\end{equation}
for $j=1,\dots,n$ in absolute value: we assume that one jump has occurred over $(t^n_{j-1},t^n_j]$ if $|\D_{j}X|>r_{n}$ for a pre-specified \textit{jump-detection threshold} $r_{n}>0$, and then estimate $\theta$ after removing such increments.
For a suitably chosen $r_{n}>0$, it is shown that the estimator of $\theta$ is asymptotically normally distributed at the same rate as diffusion models, while finite-sample performance of the threshold method strongly depends on the value of $r_{n}$.
A data-adaptive quantitative choice of $r_n$ is a subtle and sensitive problem in practice;
see \cite{Shi08}, \cite{Shi09}, as well as the references therein.
Obviously, if the model may have ``small'' jumps with positive probability, joint estimation of diffusion and jump components can exhibit a rather bad finite-sample performance; for example, some increments may simultaneously contain small jumps and large fluctuation caused by continuous component.
This practical issue can also be seen in other jump detection methods such as \cite{AitJac09}.

Recently, for estimating the volatility parameter in the non-ergodic framework, i.e., for a fixed $T>0$, $h_n=T/n$ and $T_n\equiv T$, \cite{InaYos18} proposed an alternative estimation procedure called a global jump-detection filter based on the theory of order statistics constructed from the whole increments;
there, it is shown that the global filtering can work both theoretically and numerically better than the previously studied local one (\cite{Man04}, \cite{ShiYos06}, and \cite{OgiYos11}).
Nevertheless, as will be seen later, required conditions on the distribution of jump sizes and decaying rate of $h_n\to 0$ may be more stringent in the case where $T_n\to\infty$. Hence it is not quite clear whether or not and how the global filtering of \cite{InaYos18} is directly applicable to our ergodic setting.


The primary objective of this paper is to formulate an intuitively easy-to-understand strategy, which can simultaneously estimate $\theta$ and detect jumps without any precise calibration of a jump-detection threshold. 
For this purpose, we utilize the approximate self-normalized residuals \cite{Mas13-2}, which makes the classical Jarque-Bera test \cite{JarBer87} adapted to our model.
More specifically, the hypothesis test whose significance level is $\al\in(0,1)$ is constructed by the following manner: let the null hypothesis be of ``no jump component'' against the alternative hypothesis of ``non-trivial jump component'':
\begin{equation}
\mch_0:  {\lam=0} \quad \text{vs} \quad \mch_1: {\lam>0}.
\nn
\end{equation}
Then, if the Jarque-Bera type statistic introduced later is larger than a given percentile of the chi-square distribution with $2$ degrees of freedom,
we reject the null hypothesis $\mch_0$; and otherwise, we accept $\mch_0$.
For such a test, we can intuitively regard that the largest increment contains at least one jump when the null hypothesis is rejected. 
Following this intuition, our proposed method will go as follows:
we iteratively conduct the test with removing the largest increments in the retained samples until rejection of $\mch_0$ is stopped;
after that, we construct the modified estimator of $\theta$ by the remaining samples.
Our method enables us not only just to make a ``pre-cleaning'' of diffusion-like data sequence by removing large jumps which breaks the approximate Gaussianity of the self-normalized residuals, but also to approximately quantify jumps relative to continuous fluctuations in a natural way; see Remark \ref{hm:rem_F.esti}.

This paper is organized as follows: in Section \ref{Preliminaries}, we give a brief summary of the approximate self-normalized residuals, and the Jarque-Bera type test for general jump diffusion models.
Section \ref{Proposed strategy} provides our strategy and some remarks on its practical use. 
In Section \ref{Asymptotic Results}, we will propose a least-squares type estimator and its one-step version for \eqref{hm:sde}.
In the calculation of our estimator we can sidestep optimization, and thus it is numerically tractable, with retaining high representational power of the nonlinearity in the state variable. Moreover, we will prove that our estimator is asymptotically equivalent to the ``oracle'' estimator which is constructed as if we observe the unobserved continuous part of $X$.
We show some numerical experiments results in Section \ref{Numerical experiments}.
Finally, Appendix \ref{hm:sec_proofs} presents the proofs of the results given in Section \ref{Asymptotic Results}.

\medskip

Here are some notations and conventions used throughout this paper.
We largely abbreviate ``$n$'' from the notation like $t_{j}=t^{n}_{j}$ and $h=h_{n}$. 
For any vector variable $x=(x^{(i)})$, we write $\p_x=\left(\frac{\p}{\p x^{(i)}}\right)_i$.
For any process $Y$, $\D_j Y$ denotes the $j$-th increment $Y_{t_{j}}-Y_{t_{j-1}}$.
$C$ denotes a universal positive constant which may vary at each appearance.
$\top$ stands for the transpose operator, and $v^{\otimes2}:= vv^\top$ for any matrix $v$.
The convergences in probability and in distribution are denoted by $\cip$ and $\cil$, respectively.
All limits appearing below are taken for $n\to\infty$ unless otherwise mentioned.
For two nonnegative real sequences $(a_n)$ and $(b_n)$, we write $a_n \lesssim b_n$ if $\limsup_n(a_n/b_n)<\infty$.
For any $x\in\mbbr$, $\lf x \rf$ denotes the maximum integer which does not exceed $x$.

\section{Preliminaries}
\label{Preliminaries}
To see whether a working model fits data well or not, and/or whether data in hand have outliers or not, diagnosis based on residual analysis is often done.
For jump diffusion models, \cite{Mas13-2} formulated a Jarque-Bera normality test based on self-normalized residuals for the driving noise process.
In this section, we briefly review the construction of the self-normalized residual, and the Jarque-Bera statistics with its asymptotic behavior for general ergodic jump diffusion model described as:
\begin{equation}
dX_{t} = a(X_{t},\al)dw_{t} + b(X_{t},\beta)dt + c(X_{t-})dJ_{t}.
\label{yu:sde}
\end{equation}
Given any function $f$ on $\mbbr\times\Theta$ and $s\geq0$, we hereafter write 
\begin{equation*}
f_s(\theta)=f(X_s,\theta),
\end{equation*} 
and in particular, for all $j\in\{0,\dots,n\}$ we denote
\begin{equation}
f_{j}(\theta) = f(X_{t_{j}},\theta).
\nonumber
\end{equation}
For each $j\in\{1,\dots,n\}$, let
\begin{equation}
\ep_j(\al)=\ep_{n,j}(\al):=\frac{\D_{j}X}{\sqrt{a^{2}_{j-1}(\al)h_n}}.
\label{hm:ep_def}
\end{equation}
Then, following \cite{Mas13-2} we introduce the self-normalized residual and the Jarque-Bera type statistics:
\begin{align*}
&\hat{N}_j=\hat{S}_n^{-1/2}(\ep_j(\aes)-\bar{\hat{\ep}}_n),\\
&\mathrm{JB}_n= \frac{1}{6n}\left(\sumj (\hat{N}_j)^3-3\sqrt{h_n}\sumj \p_x a_{j-1}(\aes)\right)^2+\frac{1}{24n}\left(\sumj((\hat{N}_j)^4-3)\right)^2,
\end{align*}
where 
\begin{equation*}
\bar{\hat{\ep}}_n:=\frac{1}{n}\sumj \ep_j(\aes), \quad \hat{S}_n:=\frac{1}{n}\sumj(\ep_j(\aes)-\bar{\hat{\ep}}_n)^2.
\end{equation*}
The following theorem gives the asymptotic behavior of $\mathrm{JB}_n$, which ensures theoretical validity of the Jarque-Bera type test based on $\mathrm{JB}_n$.

\begin{Thm}(\cite[Theorems 3.1 and 4.1]{Mas13-2})
\label{Achi&p}
\begin{enumerate}
\item Under $\mch_0:  {\lam=0}$ and suitable regularity conditions, for any estimator $\aes$ of $\al$ satisfying
\begin{equation}\label{sqn}
\sqrt{n}(\aes-\al_0)=O_p(1),
\end{equation}
we have
\begin{equation*}
\mathrm{JB}_{n}\cil \chi^2(2).
\end{equation*}

\item Under $\mch_1:   {\lam>0}$ and suitable regularity conditions, we have
\begin{equation*}
\mathrm{JB}_{n}\overset{P}\rightarrow \infty,
\end{equation*}
that is, $P(\mathrm{JB}_{n}>K) \to 1$ for any $K>0$.
\end{enumerate}
\end{Thm}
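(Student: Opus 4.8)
The plan is to treat the two assertions separately: the first by a central limit theorem for Hermite-type empirical moments of the self-normalized residuals, the second by showing that the presence of jumps inflates the empirical kurtosis beyond any bound. The guiding observation is that the centering by $\bar{\hat{\ep}}_n$ and the scaling by $\hat S_n^{1/2}$ built into $\hat N_j$ effect, asymptotically, an orthogonalization against the linear and quadratic Hermite polynomials, so that $\frac1n\sum_j \hat N_j^3$ and $\frac1n\sum_j \hat N_j^4$ behave like empirical averages of $H_3(\zeta_j)=\zeta_j^3-3\zeta_j$ and $H_4(\zeta_j)=\zeta_j^4-6\zeta_j^2+3$ evaluated at the idealized Gaussian scores $\zeta_j:=\D_j w/\sqrt{h_n}$, which are independent and $N(0,1)$.

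For part (1), under $\mch_0$ the equation \eqref{yu:sde} is purely continuous. First I would establish an It\^o--Taylor expansion $\ep_j(\al_0)=\zeta_j+\sqrt{h_n}\,\rho_j+(\text{remainder})$, where $\rho_j$ is $\mcf_{t_{j-1}}$-measurable and carries a first-order term proportional to $\p_x a_{j-1}(\al_0)$; this is precisely the bias that the correction $3\sqrt{h_n}\sum_j \p_x a_{j-1}(\aes)$ in $\mathrm{JB}_n$ is designed to cancel at the level of the third moment, while the drift-induced shift and the mean effect are absorbed by the centering $\bar{\hat{\ep}}_n$. Next I would replace $\aes$ by $\al_0$, using $\sqrt{n}(\aes-\al_0)=O_p(1)$ together with a Taylor expansion of the statistic in $\al$, the point being that the self-normalized construction renders the leading contribution of this substitution asymptotically orthogonal to $H_3$ and $H_4$. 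With $\hat S_n\cip 1$ and $\bar{\hat{\ep}}_n\cip 0$ in hand (by an ergodic law of large numbers for the $a$-dependent scalings), expanding the cube and the fourth power and collecting terms yields
\begin{align*}
\frac{1}{\sqrt n}\left(\sum_j \hat N_j^3 - 3\sqrt{h_n}\sum_j \p_x a_{j-1}(\aes)\right) &= \frac{1}{\sqrt n}\sum_j H_3(\zeta_j) + o_p(1),\\
\frac{1}{\sqrt n}\sum_j(\hat N_j^4 - 3) &= \frac{1}{\sqrt n}\sum_j H_4(\zeta_j) + o_p(1).
\end{align*}
Since the $\zeta_j$ are i.i.d.\ standard Gaussian, $H_3$ and $H_4$ are centered and mutually orthogonal with $\E[H_3(\zeta)^2]=3!=6$ and $\E[H_4(\zeta)^2]=4!=24$, an ordinary bivariate CLT for i.i.d.\ sums gives joint convergence to $N(0,\mathrm{diag}(6,24))$, and the continuous mapping theorem then yields $\mathrm{JB}_n\cil \frac16(\sqrt6\,Z_1)^2+\frac1{24}(\sqrt{24}\,Z_2)^2=Z_1^2+Z_2^2\sim\chi^2(2)$ for independent standard normals $Z_1,Z_2$.

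For part (2), under $\mch_1$ the jumps dominate. An interval $(t_{j-1},t_j]$ carrying a jump produces $\D_j X$ of order one, hence $\ep_j(\aes)$ of order $h_n^{-1/2}$ and $\ep_j(\aes)^4$ of order $h_n^{-2}$; since the number of jump intervals up to $T_n$ is of order $\lam T_n=\lam n h_n$, the jump contribution to $\frac1n\sum_j \ep_j(\aes)^4$ is of order $h_n^{-1}$ and therefore diverges, whereas $\hat S_n$ stays bounded in probability and bounded away from zero, the jump intervals contributing only an $O_p(1)$ amount to it because $\sum_{\text{jump }j}\ep_j^2$ is merely of order $n$. Consequently $\frac1n\sum_j \hat N_j^4\cip\infty$ at rate $h_n^{-1}$, so the kurtosis term $\frac{1}{24n}(\sum_j(\hat N_j^4-3))^2$ is of order $n h_n^{-2}\to\infty$, forcing $\mathrm{JB}_n\cip\infty$.

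The main obstacle is concentrated in part (1): rigorously showing that substituting $\aes$ for $\al_0$ and discarding the It\^o--Taylor remainders leave the $H_3$- and $H_4$-functionals unchanged up to $o_p(1)$. This demands uniform moment bounds on the residuals, an ergodic law of large numbers $\frac1n\sum_j g(X_{t_{j-1}})\cip\int g\,d\pi$ for the invariant law $\pi$ to control the $a$- and $\p_x a$-dependent averages, and---crucially---the asymptotic orthogonality of the plug-in perturbation to the third and fourth Hermite components, which is exactly what the self-normalization is engineered to provide. In part (2) the only delicate point is keeping $\hat S_n$ under control while the fourth moment explodes, which needs a mild moment condition on the jump sizes $\xi_i$ and a uniform lower bound on $a(\cdot,\aes)$ over the compact parameter set.
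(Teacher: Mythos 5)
This theorem is not proved in the paper at all: it is imported verbatim from \cite[Theorems 3.1 and 4.1]{Mas13-2}, and the appendix only proves Theorems \ref{osan}, \ref{Consistency} and \ref{Ae}, so there is no in-paper argument to compare against. Your sketch is a correct outline of the strategy of the cited reference --- the centering/normalization kills the $H_1$ and $H_2$ components exactly (whence the insensitivity to any $\sqrt{n}$-consistent plug-in $\aes$), the $3\sqrt{h_n}\sum_j\p_x a_{j-1}$ term cancels the $O(\sqrt{h_n})$ skewness bias which would otherwise blow up at rate $\sqrt{T_n}$, a martingale CLT gives the $N(0,\diag(6,24))$ limit for the $H_3$/$H_4$ sums, and under $\mch_1$ the empirical kurtosis diverges (indeed even a single non-degenerate jump contributes $(nh_n^2)^{-1}\to\infty$ to $n^{-1}\sum_j\hat N_j^4$ while $\hat S_n$ remains tight and bounded away from zero).
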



\begin{Rem}
The residual defined by \eqref{hm:ep_def} is of the Euler type with ignoring the drift fluctuation;
under the sampling conditions in Assumption \ref{Sampling} given later, we can ignore the presence of the drift term in construction of residuals.
Indeed, instead of \eqref{hm:ep_def} we could consider
\begin{equation}
\ep_j(\theta)=\ep_{n,j}(\theta):=\frac{\D_{j}X - h_n b_{j-1}(\beta) }{\sqrt{a^{2}_{j-1}(\al)h_n}}.
\nonumber
\end{equation}
Also, we could define $\mathrm{JB}_{n}$ only by the skewness or kurtosis part; this only changes the asymptotic degrees of freedom $2$ in Theorem \ref{Achi&p}-(1) by $1$.
See \cite{Mas11} for the technical details.
This case may require more computation time, while we would then have a stabilized performance under $\mch_0$ compared with the case of \eqref{hm:ep_def}.
\end{Rem}

\begin{Rem}
The results of \cite{Mas13-2} can apply even when the jump component is driven by a compound Poisson process, possibly a much broader class of finite-activity processes. It is therefore expected that we may relax the structural assumption, although the theoretical results in Section \ref{Asymptotic Results} then require a large number of modifications.
\end{Rem}

In the rest of this section, suppose that the null hypothesis $\mch_0$ is true, so that the underlying model is the diffusion process.
Among choices of $\aes$, the Gaussian quasi-maximum likelihood estimator (GQMLE) is one of the most important candidates because it has the asymptotic efficiency in H\'{a}jek-Le Cam sense (cf. \cite{Gob02}).
The GQMLE is defined as any maximizer of the Gaussian quasi-likelihood (GQL)
\begin{equation*}
\mbbh_{n}(\theta) := \sumj \log\left\{\frac{1}{\sqrt{2\pi a^{2}_{j-1}(\al)h_n}}\phi\left(\frac{\D_{j}X - b_{j-1}(\beta)h_n}{\sqrt{a^{2}_{j-1}(\al)h_n}}\right)\right\},
\end{equation*}
where $\phi$ denotes the standard normal density.
This quasi-likelihood is constructed based on the local-Gauss approximation of the transition probability $\mcl(X_{t_{j}}|X_{t_{j-1}})$ by $N(b_{j-1}(\beta)h_n, a^{2}_{j-1}(\al)h_n)$.
It is well known that the asymptotic normality holds true under suitable regularity conditions  \cite{Kes97}: For the GQMLE $\tilde{\theta}_n=(\tilde{\al}_n,\tilde{\beta}_n)$, we have
\begin{equation}
\left( \sqrt{n}(\tilde{\al}_n-\al_0),\, \sqrt{T_n}(\tilde{\beta}_n-\beta_0) \right) \cil N\left(0, \,\diag(I_{1}^{-1}(\tz), I_{2}^{-1}(\tz))\right),
\nonumber
\end{equation}
where
\begin{align*}
&I_{1}(\tz)=\frac{1}{2}\int \left(\frac{\p_\al a^2}{a^2}(x,\al_0)\right)^{\otimes2}\pi_0(dx),\\
&I_{2}(\tz)=\int\left(\frac{\p_\beta b}{a}(x,\beta_0)\right)^{\otimes2}\pi_0(dx),
\end{align*}
both assumed to be positive definite.
Here $\pi_0$ denotes the invariant measure of $X$.

The strategy we will describe in Section \ref{Proposed strategy} is in principle valid even when the drift and diffusion coefficients are nonlinear in the parameters.
However, if the coefficients $a$ and $b$ are highly nonlinear and/or the number of the parameters is large, then the calculation of the GQMLE can be quite time-consuming.
To deal with such a problem, it is effective to separate optimizations of $\al$ and $\beta$ by utilizing the difference of the small-time stochastic orders of the $dt$- and $dw_t$-terms.
To be specific, we introduce the following stepwise version of the GQMLE $\check{\theta}_n:=(\check{\al}_n,\check{\beta}_n)$:
\begin{align*}
&\check{\al}_n\in\argmax_{\al\in\bar{\Theta}_\al} \sumj \log\left\{\frac{1}{\sqrt{2\pi a^{2}_{j-1}(\al)h_n}}\phi\left(\frac{\D_{j}X}{\sqrt{a^{2}_{j-1}(\al)h_n}}\right)\right\},\\
&\check{\beta}_n\in\argmax_{\beta\in\bar{\Theta}_\beta}
\mbbh_{n}(\check{\al}_n,\beta).
\end{align*}
Under some suitable regularity condition, it is shown that the stepwise GQMLE has the same asymptotic distribution as the original GQMLE $\tilde{\theta}_n$ (cf. \cite{UchYos12}).
Hence $\check{\theta}_n$ is asymptotically efficient, and the claims in Theorem \ref{Achi&p} with $\aes$ replaced by $\check{\al}_n$ hold true.
Although in general we have to conduct two optimization for the stepwise estimation scheme, it lessens the number of the parameters to be simultaneously optimized, thus reducing the computational time.

\section{Proposed strategy}\label{Proposed strategy}
In this section, still looking at \eqref{yu:sde}, we propose an iterative jump detection procedure based on the Jarque-Bera type test introduced in the previous section.

Let $q\in(0,1)$ be a small number, which will later serve as the significance level.
Suppose that we are given an estimator $\tes$ of $\theta=(\al,\beta)$ defined to be any element $\tes \in \argmax M_n$ for some contrast function $M_n$ of the from
\begin{equation}
M_n(\theta) := \sumj m_{h_{n}}\left( X_{t_{j-1}},\D_j X;\,\theta \right).
\nonumber
\end{equation}
Denote by $\chi^2_q(2)$ the upper $q$-percentile of the chi-squared distribution with $2$ degrees of freedom.
Then, our procedure is as follows; we implicitly assume that there is no tie among the values $|\D_{1}X|,\dots,|\D_{n}X|$.

\medskip

\begin{itemize}
\item[{\it Step 0.}] Set $k=k_n=0$, and let $\hat{\mcj}_{n}^0:=\emptyset$.

\medskip

\item[{\it Step 1.}]  
Calculate the modified estimator $\tes^k$ defined by
\begin{equation}
\tes^k \in \argmax_{\theta\in\Theta} \sum_{j\notin\hat{\mcj}_n^k} m_{h_{n}}\left( X_{t_{j-1}},\D_j X;\,\theta \right),
\nonumber
\end{equation}
then let
\begin{equation*}
\bar{\hat{\ep}}_n^k:=\frac{1}{n-k}\sum_{j\notin\hat{\mcj}_{n}^k} \ep_j(\aes^k), \qquad \hat{S}_n^k:=\frac{1}{n-k}\sum_{j\notin\hat{\mcj}_{n}^k}(\ep_j(\aes^k)-\bar{\hat{\ep}}_n^k)^2,
\end{equation*}
and (re-)construct the following modified self-normalized residuals $(\hat{N}_j^k)_{j=1}^n$ and Jarque-Bera type statistics $\mathrm{JB}_{n}^k$:
\begin{align}\label{yu:msnr}
\hat{N}_j^k &:=(\hat{S}_n^k)^{-1/2}(\ep_j(\aes^k)-\bar{\hat{\ep}}_n^k), \nn\\
\mathrm{JB}_{n}^k &:= \frac{1}{6(n-k)}\left(\sum_{j\notin\hat{\mcj}_{n}^k} (\hat{N}_j^k)^3-3\sqrt{h_n}\sum_{j\notin\hat{\mcj}_{n}}\p_x a_{j-1}(\aes^k)\right)^2
\\
&{}\qquad
+\frac{1}{24(n-k)}\left(\sum_{j\notin\hat{\mcj}_{n}^k}((\hat{N}_j^k)^4-3)\right)^2.\nn
\end{align}

\medskip

\item[{\it Step 2.}]
If $\mathrm{JB}_{n}^k>\chi^2_q(2)$, then pick out the interval number
\begin{equation}
j(k+1):=\argmax_{j\in\{1,\dots, n\}\setminus\hat{\mcj}_{n}^k} |\D_{j}X|,
\nonumber
\end{equation}
add it to the set $\hat{\mcj}_{n}^k$:
\begin{equation}
\hat{\mcj}_{n}^{k+1} := \hat{\mcj}_{n}^{k} \cup \{j(k+1)\},
\nonumber
\end{equation}
and then return to {\it Step 1}.
If $\mathrm{JB}_{n}^k \le \chi^2_q(2)$, then set an estimated number of jumps to be
\begin{equation}
k^\star=k_n^\star(\omega) := \min\left\{ k\le n;~ \mathrm{JB}_{n}^k \le \chi^2_q(2) \right\}
\nonumber
\end{equation}
and go to {\it Step 3}.

\medskip

\item[{\it Step 3.}]
If $k^\star=0$, regard that there is no jump; otherwise, we regard that each of $\D_{j(1)}X, \dots, \D_{j(k^\star)} X$ contains one jump.
Finally, set $\tes^{k^\star}$ to be an estimator of $\theta$.
\end{itemize}

\medskip

In practice, the above-described method enables us to divide the set of the whole increments $(\D_{j}X)_{j=1}^{n}$ into the following two categories:
\begin{itemize}
\item ``One-jump'' group $(\D_j X)_{j\in\hat{\mcj}_n^{k^{\star}}}=\{ \D_{j(1)}X,\dots,\D_{j(k^\star)}X\}$, and
\item ``No-jump'' group $(\D_j X)_{j\notin\hat{\mcj}_n^{k^{\star}}}=(\D_{j}X)_{j=1}^{n} \setminus \{ \D_{j(1)}X,\dots,\D_{j(k^\star)}X\}$.
\end{itemize}
Automatically entailed just after jump removals are stopped is the estimator $\tes^{k^{\star}}$ of the drift and diffusion parts of $X$,
which is the maximizer of the \textit{modified Gaussian quasi-likelihood} defined by
\begin{equation*}
\theta \mapsto \sum_{j\notin\hat{\mcj}_{n}^{k^\star}}
\log\left\{\frac{1}{\sqrt{2\pi a^{2}_{j-1}(\al)h_n}}\phi\left(\frac{\D_{j}X - b_{j-1}(\beta)h_n}{\sqrt{a^{2}_{j-1}(\al)h_n}}\right)\right\}.
\end{equation*}
As is demonstrated in Section \ref{Asymptotic Results}, our primary setting \eqref{hm:sde} is designed not to require any optimization using a numerical search such as the quasi-Newton method.

We should note that, due to the nature of the testing, there may remain positive probability of spurious detection of jumps no matter how large number of data is.
Nevertheless, as long as the underlying model is correct, the number of removals is much smaller than the total sample size, so that spurious removals are not serious here.

\medskip

\begin{Rem}\label{shift}
In the above-described procedure we simply remove the largest increments at each step, with keeping the positions of the remaining data.
Note that in the construction of the modified estimator $\tes^k$ it is incorrect to use the ``shifted" samples $(Y_{t_j})_{j\notin\hat{\mcj}_n^{k_n}}$ defined by
\begin{equation*}
Y_{t_j}=X_{t_j}-\sum_{i\in\hat{\mcj}_n^{k_n}\cap\{1,\dots,j\}}\D_i X.
\end{equation*}
This is because one-step transition density of the original process $X$ is spatially different from $Y$, so that the estimation result would not suitably reflect the information of data.
\end{Rem}

\begin{Rem}\label{ite}
At $k$-th iteration, it can be regarded that we conduct the Jarque-Bera type test for the trimmed data $(X_{t_{j-1}},\D_j X)_{j\notin\hat{\mcj}_n^k}$. 
Hence the null hypothesis $\mch^k_0$ and alternative hypothesis $\mch^k_1$ of the test are formally written as follows:
\begin{align*}
&\mch^k_0: \sharp\left\{j\in\{1,\dots,n\} \ \middle| \ \D_j N\geq 1 \right\}\leq k,\\
&\mch^k_1: \sharp\left\{j\in\{1,\dots,n\} \ \middle| \ \D_j N\geq 1 \right\}>k,
\end{align*}
where $\sharp A$ denotes the cardinality of any set $A$.
From this formulation, we have the inclusion relation
\begin{equation*}
\mch_0\subset \mch_0^1\subset \mch_0^2 \subset \dots \subset \mch_0^k\subset \cdot\cdot\cdot,
\end{equation*} 
which implicitly suggests that we can extract more than one increments at \textit{Step 2} when seemingly several jumps do exist:
indeed, in view of the expectation of Poisson processes, it seems reasonable to remove at the first rejection of $\mch_0$ not only $|\D_{j(1)}X|$ but the first $O(T_n)$ largest increments, resulting in acceleration of terminating the procedure.
\end{Rem}

\begin{Rem}
In practice, the size of ``last-removed'' increment:
\begin{equation}
r_{n}(k^{\star}):=|\D_{j(k^{\star})}X|
\nonumber
\end{equation}
would be used as a threshold for detecting jumps for future increments.
\end{Rem}

\begin{Rem}
\label{hm:rem_F.esti}
When the jump coefficient is parameterized as $c(x,\gam)$ and a model of the common jump distribution, say $F_J$, of the compound Poisson process $J$ is given, we may consider estimation of $\gam$ and $F_J$ based on the sequence $\{\D_{j(k)}X/c_{j(k)-1}(\gam)\}_{k=1}^{k^\star}$, with supposing that they are i.i.d. random variables with common jump distribution $F_J$; note that number of jumps tends to increase for a larger $T_n$.
This is beyond the scope of this paper, and we leave it as a future study.
\end{Rem}

\section{Asymptotic results}\label{Asymptotic Results}

We now return to the model \eqref{hm:sde}.
As was mentioned in the previous section, we have a choice of an estimator of $\theta$.
As a matter of course, for each estimator $\tes$, we need to study asymptotic behavior of its modified version $\tes^{k*}$.
In this section, we will derive asymptotic results for a numerically tractable least-squares type estimator and the corresponding one-step improved version. 
For simplicity, we write 
\begin{align*}
&\mbba(x)=(a^{(1)}(x),\dots,a^{(p_\al)}(x))^\top, \quad \mbbb(x)=(b^{(1)}(x),\dots,b^{(p_\beta)}(x))^\top.
\end{align*}

\begin{Assumption}[Regularity of coefficients]\label{Ascoef}
The following conditions hold:
\begin{enumerate}
\item $\ds{0< \inf_{x,\al}\mbba(x)^{\top}\al \wedge \inf_x |c(x)|}$ \ and \ $\ds{\sup_{x,\al}\mbba(x)^{\top}\al \vee \sup_x |c(x)| <\infty}$;
\item $\ds{\left|\sqrt{\mbba(x)^\top\al_0}-\sqrt{\mbba(y)^\top\al_0}\right|+\left|\mbbb(x)-\mbbb(y)\right|+\left|c(x)-c(y)\right|\lesssim |x-y|, \quad x,y\in\mbbr}$;
\item There exists a constant $C'\ge 0$ for which $\ds{|\p_x \mbba(x)| + |\p_x^{2} \mbba(x)| \lesssim 1+|x|^{C'}, \quad x\in\mbbr}$.
\end{enumerate}
\end{Assumption}

Here the supremum with respect to $\al$ is taken over the compact set $\bar{\Theta}_\al$.
The basic scenario to construct an estimator of $\theta$ when $X$ had no jumps is as follows:
\begin{itemize}
\item We first estimate the diffusion parameter by the least-squares estimator (LSE):
\begin{align}
\tilde{\al}_n &:= \argmin_\al \sumj \left\{(\D_j X)^2-h_n \mbba_{j-1}^\top\al\right\}^2 \nn\\
&=\frac{1}{h_n} \left(\sumj \mbba_{j-1}\mbba_{j-1}^\top\right)^{-1} \sumj (\D_j X)^2\mbba_{j-1}. \nn\\
\nonumber
\end{align}

\item We then improve the LSE through the scoring with the GQL:
\begin{equation}\label{ose}
\aes:= \tilde{\al}_n-\left(\sumj\frac{\mbba_{j-1}\mbba_{j-1}^\top}{(\mbba_{j-1}^\top\tilde{\al}_n)^2}\right)^{-1}\sumj \left(\frac{1}{\mbba_{j-1}^\top\tilde{\al}_n}-\frac{(\D_j X)^2}{h_n(\mbba_{j-1}^\top \tilde{\al}_n)^2}\right)\mbba_{j-1}.
\end{equation}
\item Finally we estimate the drift parameter by the plug-in LSE:
\begin{align}
\bes&:=\argmin_\beta \sumj \frac{(\D_j X-h_n \mbbb_{j-1}^\top\beta)^2}{\mbba_{j-1}^\top\aes} \nn\\
&=\frac{1}{h_n}\left(\sumj \frac{\mbbb_{j-1}\mbbb_{j-1}^\top}{\mbba_{j-1}^\top\aes}\right)^{-1} \sumj \frac{\D_j X}{\mbba_{j-1}^\top\aes}\mbbb_{j-1}.
\nn
\end{align}
\end{itemize}
It is known that $\tilde{\al}_n$ is not asymptotically efficient, 
while $\bes$ is in case where the underlying process is a diffusion process, which is why we additionally consider the improved version $\aes$ based on the stepwise GQL: 
\begin{equation*}
\mbbh_{1,n}(\al):=-\frac{1}{2}\sumj\left\{\log\left( 2\pi h_n \mbba_{j-1}^\top\al \right)+\frac{(\D_j X)^2}{h_n\mbba_{j-1}^\top\al}\right\}.
\end{equation*}
Then $\aes$ is asymptotic efficient under appropriate regularity conditions.
The form of the second term in the right-hand side in \eqref{ose} comes from the quasi-score associated with $\mbbh_{1,n}(\al)$ and 
the expression of the Fisher information matrix corresponding to $\al$, where the latter equals the upper left part of $\Sig_0$ in Theorem \ref{osan}.

Now, in the presence of jumps, in view of Section \ref{Proposed strategy} we introduce the modified estimators
\begin{align}
&\tilde{\al}_n^{k_n}=\frac{1}{h_n} \left(\sum_{j\notin\hat{\mcj}_n^{k_n}} \mbba_{j-1}\mbba_{j-1}^\top\right)^{-1} \sum_{j\notin\hat{\mcj}_n^{k_n}} (\D_j X)^2\mbba_{j-1},
\nn\\
&\bes^{k_n}=\frac{1}{h_n}\left(\sum_{j\notin\hat{\mcj}_n^{k_n}} \frac{\mbbb_{j-1}\mbbb_{j-1}^\top}{\mbba_{j-1}^\top\aes^{k_n}}\right)^{-1} \sum_{j\notin\hat{\mcj}_n^{k_n}} \frac{\D_j X}{\mbba_{j-1}^\top\aes^{k_n}}\mbbb_{j-1},
\nn
\end{align}
where $\aes^{k_n}$ is the improved estimator defined by
\begin{equation}
\aes^{k_n}=\tilde{\al}_n^{k_n}-\left(\sum_{j\notin\hat{\mcj}_n^{k_n}}\frac{\mbba_{j-1}\mbba_{j-1}^\top}{(\mbba_{j-1}^\top\tilde{\al}_n^{k_n})^2}\right)^{-1}\sum_{j\notin\hat{\mcj}_n^{k_n}} \left(\frac{1}{\mbba_{j-1}^\top\tilde{\al}_n^{k_n}}-\frac{(\D_j X)^2}{h_n(\mbba_{j-1}^\top \tilde{\al}_n^{k_n})^2}\right)\mbba_{j-1}.
\label{hm:ose-a}
\end{equation}
The inverse matrices appearing in the above definitions asymptotically exist under the forthcoming conditions, hence implicitly assumed here for brevity.
What is important from these expressions is that we can calculate the modified estimators $\tilde{\al}_n^{k_n}$, $\bes^{k_n}$, and $\aes^{k_n}$ simply by removing the indices in $\hat{\mcj}_n^{k_n}$ in computing the sums without repetitive numerical optimizations, thus reducing the computational time to a large extent.
Further, it should also be noted that we may proceed only with $\tilde{\al}_n^{k_n}$ without the improved version $\hat{\al}_n^{k_n}$, if the asymptotically efficient estimator is not the first thing to have and quick-to-compute estimator is more needed.

\medskip

To state our main result, we introduce further assumptions below.

\begin{Assumption}[Stability]\label{Moments}$\ $
\begin{enumerate}
\item There exists a unique invariant probability measure $\pi_0$, and for any function $f\in L_1(\pi_0)$, we have
\begin{equation*}
\frac{1}{T}\int_0^T f(X_t)dt\cip \int_\mbbr f(x)\pi_0(dx), \quad \mathrm{as} \ T\to\infty.
\end{equation*} 
\item $\ds{\sup_{t\in\mbbr^+} E[|X_t|^q]<\infty}$ for any $q>0$.
\end{enumerate}
\end{Assumption}

\medskip

\begin{Assumption}[Sampling design]\label{Sampling}
There exist positive constants $\kappa',\kappa \in (1/2,1)$ such that
\begin{equation}
n^{-\kappa'} \lesssim h_n \lesssim n^{-\kappa}.
\nonumber
\end{equation}
\end{Assumption}

Recall that the driving noise $J$ can be expressed as 
\begin{equation*}
J_t=\sum_{i=1}^{N_t} \xi_i,
\end{equation*}
by a Poisson process $N$ and i.i.d random variables $(\xi_i)$ being independent of $N$.

\begin{Assumption}[Jump size]\label{Asjsize}$\ $
\begin{enumerate}
\item $E[|\xi_1|^q]<\infty$ for any $q>0$.
%
%
\item In addition to Assumption \ref{Sampling},
\begin{equation}
\limsup_{x\downarrow 0} x^{-s}P\left( |\xi_1| \le x \right) < \infty,
\label{hm:add-3}
\end{equation}
for some constant $s$ satisfying
\begin{equation}
s > \frac{4(1-\kappa)}{2\kappa -1}.
\nonumber
\end{equation}
\end{enumerate}
\end{Assumption}

\medskip

Here are some technical remarks on each assumption.
Assumption \ref{Ascoef} ensures the existence of a {\cadlag} solution of \eqref{hm:sde}, and its Markovian property (cf. \cite[chapter 6]{App09}).
Assumption \ref{Moments} is essential to derive our theoretical results. In our Markovian framework, it suffices for Assumption \ref{Moments}-(1) to have
\begin{equation}
\left\| P_t(x,\cdot) -\pi(\cdot) \right\|_{TV} \to 0, \quad t\to\infty, \quad x\in\mbbr,
\label{hm:ergodicity}
\end{equation}
for some probability measure $\pi_0$, where $\| \mathfrak{m}(\cdot) \|_{TV}$ denotes the total variation norm of a signed measure $\mathfrak{m}$ and $\{P_t(x,dy)\}$ does the family of transition probability of $X$; then, $\pi_0$ is the unique invariant measure of $X$, and Assumption \ref{Moments}-(1) holds for any $f\in L_1(\pi_0)$ and any initial distribution $\mcl(X_0)$, see \cite{Bha82} for details.
Further, \eqref{hm:ergodicity} with Assumption \ref{Moments}-(2) implies that
\begin{equation}
\int_\mbbr|x|^q\pi_0(dx)<\infty
\nonumber
\end{equation}
for any $q>0$; this can be seen in a standard manner using Fatou's lemma and the monotone convergence theorem through a smooth truncation of the mapping $x\mapsto|x|^q$ into a compact set.
We refer to \cite{Kul09}, \cite{Mas08}, and \cite{Mas13-1} for an easy-to-check sufficient condition for \eqref{hm:ergodicity} and Assumption \ref{Moments}-(2).

Assumptions \ref{Sampling} and \ref{Asjsize} describe a tradeoff between sampling frequency and probability of small jump size (quicker decay of $h_n$ allows for more frequent small jumps of $J$).
We have formulated them with giving preference to simplicity over complexity. See Section \ref{hm:sec_pre-rems} for some technical consequences which we will really require in the proofs.

\begin{Rem}
We are focusing on estimation of both drift and diffusion coefficients under the ergodicity.
Nevertheless, we may consistently estimate the diffusion coefficient even when the terminal sampling time is fixed, such as $T_n\equiv 1$, without ergodicity; see \cite{GenJac93}, and also \cite{InaYos18} as well as the references therein.
Since \cite{Mas13-2} can handle the non-ergodic case as well, it is expected that our estimation strategy in Section \ref{Proposed strategy} would remain in place and the theoretical results in this section would have trivial non-ergodic counterparts, to be valid under much weaker assumptions; in particular, we would only require \eqref{hm:add-3} for some $s>0$.
\end{Rem}

\medskip

To investigate the asymptotic property of our estimators, we introduce the unobserved continuous part of $X$ defined by
\begin{align*} 
\Xc_t=X_t-X_0-\int_0^t c(X_{s-})dJ_s=\int_0^t a(X_s,\al_0)dw_t+\int_0^t b(X_s,\beta_0)dt.
\end{align*}
Let $(\check{\al}_n)$ be any random sequence such that
\begin{equation}\label{ines}
\sqrt{n}(\check{\al}_n-\al_0)=O_p(1).
\end{equation}
As in \eqref{hm:ose-a}, we define the random sequence $\aes^{\mathrm{cont}}$ by
\begin{equation*}
\aes^{\mathrm{cont}}=\check{\al}_n-\left(\sumj\frac{\mbba_{j-1}\mbba_{j-1}^\top}{(\mbba_{j-1}^\top\check{\al}_n)^2}\right)^{-1}\sumj\left(\frac{1}{\mbba_{j-1}^\top\check{\al}_n}-\frac{(\D_j \Xc)^2}{h_n(\mbba_{j-1}^\top \check{\al}_n)^2}\right)\mbba_{j-1}.
\end{equation*}
Correspondingly, we also define
\begin{equation*}
\bes^{\mathrm{cont}}:=\frac{1}{h_n}\left(\sumj\frac{\mbbb_{j-1}\mbbb_{j-1}^\top}{\mbba_{j-1}^\top\aes^{\mathrm{cont}}}\right)^{-1} \sumj \frac{\D_j \Xc}{\mbba_{j-1}^\top\aes^{\mathrm{cont}}}\mbbb_{j-1}.
\end{equation*}
As is expected, $(\aes^{\mathrm{cont}},\bes^{\mathrm{cont}})$ serves as a good estimator if it could be computed:

\begin{Thm}\label{osan}
Suppose that Assumptions \ref{Ascoef} to \ref{Sampling}, and Assumption \ref{Asjsize}-(1) hold, and that both $\int \mathbb{A}(x)^{\otimes 2}\pi_0(dx)$ and $\int \mathbb{B}(x)^{\otimes 2}\pi_0(dx)$ are positive definite.
Then we have
\begin{align*}
\left(\sqrt{n}(\aes^{\mathrm{cont}}-\al_0), \sqrt{T_n}(\bes^{\mathrm{cont}}-\beta_0)\right)\cil N\left(0, \Sig_0\right),
\end{align*}
where 
\begin{equation*}
\Sig_0:=\begin{pmatrix}
\displaystyle2\left\{\int\left(\frac{\mbba(x)}{(\mbba(x))^\top\al_0}\right)^{\otimes2}\pi_0(dx)\right\}^{-1} & \displaystyle O\\
\displaystyle O &\displaystyle \left\{\int\frac{\mbbb^{\otimes2}(x)}{\mbba(x)^\top\al_0}\pi_0(dx)\right\}^{-1}
\end{pmatrix}.
\end{equation*}
\end{Thm}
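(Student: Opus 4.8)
The plan is to read off Theorem \ref{osan} from the classical Gaussian quasi-likelihood asymptotics for a diffusion (as in \cite{Kes97} and \cite{UchYos12}), the three main tools being conditional moment expansions of the continuous increments $\D_j\Xc$, a martingale central limit theorem, and the ergodic theorem of Assumption \ref{Moments}-(1). The starting observation is that, conditionally on $\mcf_{t_{j-1}}$, the increment $\D_j\Xc=\int_{t_{j-1}}^{t_j}a(X_s,\al_0)\,dw_s+\int_{t_{j-1}}^{t_j}b(X_s,\beta_0)\,ds$ is, up to higher-order terms, Gaussian with mean $\mbbb_{j-1}^\top\beta_0\,h_n$ and variance $\mbba_{j-1}^\top\al_0\,h_n$. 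First I would establish, via It\^o--Taylor expansion and the moment and growth bounds of Assumptions \ref{Ascoef} and \ref{Moments}-(2), the uniform conditional estimates
\[
\E[\D_j\Xc\mid\mcf_{t_{j-1}}]=\mbbb_{j-1}^\top\beta_0 h_n+O(h_n^2),\quad
\E[(\D_j\Xc)^2\mid\mcf_{t_{j-1}}]=\mbba_{j-1}^\top\al_0 h_n+O(h_n^2),
\]
together with the fourth-moment analogue $\E[(\D_j\Xc)^4\mid\mcf_{t_{j-1}}]=3(\mbba_{j-1}^\top\al_0)^2h_n^2+O(h_n^3)$. Here the contribution of the jump part of $X$ to the coefficient $a(X_s,\al_0)$ inside the interval is absorbed into the $O(h_n^2)$ remainder, being controlled by the sampling condition $\kappa>1/2$ of Assumption \ref{Sampling}.

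For the diffusion parameter I would exploit the one-step (scoring) structure of $\aes^{\mathrm{cont}}$. Writing $G_n(\al)=\sumj(\frac{1}{\mbba_{j-1}^\top\al}-\frac{(\D_j\Xc)^2}{h_n(\mbba_{j-1}^\top\al)^2})\mbba_{j-1}$ and $H_n(\al)=\sumj\frac{\mbba_{j-1}\mbba_{j-1}^\top}{(\mbba_{j-1}^\top\al)^2}$, the definition reads $\aes^{\mathrm{cont}}=\check{\al}_n-H_n(\check{\al}_n)^{-1}G_n(\check{\al}_n)$; a first-order Taylor expansion of $G_n(\check{\al}_n)$ about $\al_0$, combined with the fact that (by the second moment estimate) $\frac1n\p_\al G_n\cip\mci_\al:=\int(\mbba(x)/(\mbba(x)^\top\al_0))^{\otimes2}\pi_0(dx)$ locally uniformly, which is also the limit of $\frac1n H_n$, yields the key cancellation
\[
\sqrt n(\aes^{\mathrm{cont}}-\al_0)=-\Big(\tfrac1n H_n(\al_0)\Big)^{-1}\tfrac{1}{\sqrt n}G_n(\al_0)+o_p(1),
\]
so that the merely $\sqrt n$-consistent preliminary estimator $\check{\al}_n$ drops out of the limit. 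Since $\frac{1}{\sqrt n}G_n(\al_0)$ is a sum of martingale differences whose conditional variance, by the second and fourth moment estimates and $\mathrm{Var}(\chi^2_1)=2$, converges to $2\mci_\al$, a martingale CLT gives $\frac{1}{\sqrt n}G_n(\al_0)\cil N(0,2\mci_\al)$ and hence $\sqrt n(\aes^{\mathrm{cont}}-\al_0)\cil N(0,2\mci_\al^{-1})$, the upper-left block of $\Sig_0$; invertibility of $\mci_\al$ follows from the assumed positive definiteness of $\int\mbba^{\otimes2}\pi_0$ and the two-sided bound on $\mbba^\top\al_0$ in Assumption \ref{Ascoef}-(1).

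For the drift parameter I would linearize the explicit plug-in least-squares estimator. Substituting $\D_j\Xc=\mbbb_{j-1}^\top\beta_0 h_n+\chi_j+R_j$, with $\chi_j:=\int_{t_{j-1}}^{t_j}a(X_s,\al_0)\,dw_s$ the martingale part and $R_j=O(h_n^2)$ the conditional-mean remainder, the defining identity for $\bes^{\mathrm{cont}}$ gives (using $T_n=nh_n$)
\[
\sqrt{T_n}(\bes^{\mathrm{cont}}-\beta_0)=\Big(\tfrac1n\sumj\tfrac{\mbbb_{j-1}\mbbb_{j-1}^\top}{\mbba_{j-1}^\top\aes^{\mathrm{cont}}}\Big)^{-1}\tfrac{1}{\sqrt{T_n}}\sumj\tfrac{\chi_j}{\mbba_{j-1}^\top\aes^{\mathrm{cont}}}\mbbb_{j-1}+o_p(1),
\]
where the matrix converges to $\mci_\beta:=\int\mbbb^{\otimes2}(x)/(\mbba(x)^\top\al_0)\pi_0(dx)$ by the ergodic theorem and consistency of $\aes^{\mathrm{cont}}$. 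The $R_j$-terms contribute $O_p(\sqrt n h_n^{3/2})=o_p(1)$, again by $\kappa>1/2$, and I would show that replacing $\aes^{\mathrm{cont}}$ by $\al_0$ in the remaining sum is harmless: a first-order expansion in $\al$ produces a martingale term that is $O_p(1)$ times $\aes^{\mathrm{cont}}-\al_0=O_p(n^{-1/2})=o_p(1)$, the slower $\sqrt{T_n}$-rate for $\beta$ being exactly what makes this negligible. The leading martingale $\frac{1}{\sqrt{T_n}}\sumj\frac{\chi_j}{\mbba_{j-1}^\top\al_0}\mbbb_{j-1}$ has conditional variance $\frac{h_n}{T_n}\sumj\frac{\mbbb_{j-1}^{\otimes2}}{\mbba_{j-1}^\top\al_0}\cip\mci_\beta$, so a martingale CLT yields $N(0,\mci_\beta)$ and therefore $\sqrt{T_n}(\bes^{\mathrm{cont}}-\beta_0)\cil N(0,\mci_\beta^{-1})$, the lower-right block.

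Finally I would upgrade the two marginal convergences to the joint block-diagonal statement by applying a multivariate martingale CLT to the stacked array. The essential point is asymptotic independence: the $\al$-component is driven by the even (quadratic) functional $1-(\D_j\Xc)^2/(h_n\mbba_{j-1}^\top\al_0)$ of the Gaussian increment, whereas the $\beta$-component is driven by the odd (linear) functional $\chi_j$, and the conditional cross-covariance of the two, after the $1/(\sqrt n\sqrt{T_n})$ normalization, is of order $\sqrt{h_n}\to0$ because the leading odd-times-even Gaussian moment vanishes. I expect the main obstacle to be the uniform control of all the $O(h_n^2)$- and $O(h_n^3)$-type remainders in the conditional moment expansions and the verification of the Lindeberg condition for the martingale CLT, both of which must be shown negligible after the specific $\sqrt n$ and $\sqrt{T_n}$ normalizations; this is precisely where Assumption \ref{Sampling} ($\kappa>1/2$), the uniform moment bound of Assumption \ref{Moments}-(2), and the at-most-polynomial growth of the derivatives of $\mbba$ in Assumption \ref{Ascoef}-(3) are consumed, whereas the structural pieces (the one-step cancellation for $\al$ and the odd/even independence) are comparatively light.
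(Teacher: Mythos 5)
Your proposal is correct and follows essentially the same route as the paper: conditional moment expansions of $\D_j\Xc$ (first, second and fourth moments), a joint martingale CLT via Cram\'er--Wold with block-diagonality coming from the vanishing odd-even cross moment, the one-step cancellation that removes the $\sqrt{n}$-consistent pilot $\check{\al}_n$, and linearization of the plug-in drift estimator. The only minor imprecision is that with $b$ merely Lipschitz the conditional-mean remainder is $O(h_n^{3/2})$ rather than $O(h_n^2)$, but this still vanishes after the $\sqrt{T_n}$ normalization since $nh_n^2\to0$, so nothing breaks.
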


\begin{Rem}
\label{hm:rem_asymp.eff}
The asymptotic covariance matrix of $\hat{\beta}^{\mathrm{cont}}_n$ is formally the efficient one, see \cite[Theorem 2.2]{KohNuaTra17}.
Moreover, that of $\hat{\al}^{\mathrm{cont}}_n$ is the same as the estimator in \cite{ShiYos06} and \cite{OgiYos11} based on a jump-detection filter.
\end{Rem}

The next theorem states that, asymptotically, on the set $ \left\{ \mathrm{JB}_n^{k_n}\leq\chi^2_q(2)\right\}$, the number of jumps is less than $k_n$, and thus the modified LSE type diffusion estimator $\tilde{\al}_n^{k_n}$ consists of (true) "no-jump" group and has the $\sqrt{n}$-consistency.

\begin{Thm}\label{Consistency}
Suppose that Assumptions \ref{Ascoef} to \ref{Asjsize} hold, and that both $\int \mathbb{A}(x)^{\otimes 2}\pi_0(dx)$ and $\int \mathbb{B}(x)^{\otimes 2}\pi_0(dx)$ are positive definite.
Then, for any $\ep>0$, we can find a sufficiently large $M>0$ and $N\in\mbbn$ such that
\begin{equation}
\sup_{n\ge N}P\left(\left\{|\sqrt{n}(\tilde{\al}_n^{k_n}-\al_0)|>M\right\}\cap\left\{\mathrm{JB}_n^{k_n}\leq\chi^2_q(2)\right\}\right)<\ep.
\nn
\end{equation}
\label{thm_consis1}
\end{Thm}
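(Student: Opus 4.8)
The plan is to compare the trimmed estimator $\tilde{\al}_n^{k_n}$, on the acceptance event $\{\mathrm{JB}_n^{k_n}\le\chi^2_q(2)\}$, with the unobservable ``oracle'' least-squares estimator built from the continuous part $\Xc$,
\[
\tilde{\al}_n^{\mathrm{cont}}:=\frac{1}{h_n}\Big(\sumj\mbba_{j-1}\mbba_{j-1}^\top\Big)^{-1}\sumj(\D_j\Xc)^2\mbba_{j-1},
\]
which is $\sqrt{n}$-tight by the diffusion-type arguments underlying Theorem \ref{osan}. The point to establish is that, with probability tending to one, the acceptance condition forces $\hat{\mcj}_n^{k_n}$ to consist of the $k_n$ largest increments, to contain every interval carrying a jump, and yet not to discard too many purely continuous increments; on that event $\tilde{\al}_n^{k_n}$ is a mildly trimmed version of $\tilde{\al}_n^{\mathrm{cont}}$ whose deviation from it is $o_p(n^{-1/2})$.

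First I would introduce a good event $G_n$ on which two structural facts hold. (i) No sampling interval carries more than one jump; since the probability of two or more jumps in one interval is $O(h_n^2)$, a union bound gives failure probability $\lesssim nh_n^2\lesssim n^{1-2\kappa}\to0$ by Assumption \ref{Sampling}. (ii) A clean threshold separation: for $v_n$ slightly larger than the continuous scale $\sqrt{h_n\log(1/h_n)}$, every ``no-jump'' increment satisfies $|\D_jX|\le v_n$ while every ``one-jump'' increment satisfies $|\D_jX|>v_n$. The first half follows from a Gaussian maximal inequality for $\int a\,dw$ together with Assumptions \ref{Ascoef} and \ref{Moments}; the second half is where Assumption \ref{Asjsize} enters decisively, since a jump interval fails to exceed $v_n$ only when its jump size $|\xi|$ is comparable to the continuous fluctuation, whose probability is controlled by the lower-tail estimate \eqref{hm:add-3}. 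Summing over the $O_p(T_n)$ jumps and balancing $v_n$ against $h_n\asymp n^{-\kappa}$ is exactly what produces the quantitative requirement $s>4(1-\kappa)/(2\kappa-1)$; I expect this threshold optimization, rather than any single estimate, to be the technically most delicate ingredient.

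Next, on $G_n\cap\{\mathrm{JB}_n^{k_n}\le\chi^2_q(2)\}$ I would argue that all jump-carrying increments have been removed. The acceptance inequality bounds the kurtosis part of \eqref{yu:msnr}, giving $\big|\sum_{j\notin\hat{\mcj}_n^{k_n}}((\hat{N}_j^{k_n})^4-3)\big|\lesssim\sqrt{n}$. On the other hand, if even one jump increment were retained, its self-normalized residual would be of order $h_n^{-1/2}$ (an $O_p(1)$ increment against a normalizer of order $h_n^{1/2}$, while $\hat{S}_n^{k_n}$ stays $O_p(1)$ because a single large term divided by $n-k_n\asymp T_n\to\infty$ is negligible), so this one term would push the left-hand side to order $h_n^{-2}\asymp n^{2\kappa}\gg\sqrt{n}$ --- precisely the divergence mechanism of Theorem \ref{Achi&p}-(2) read on the trimmed sample. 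Hence no jump is retained, and by the separation in $G_n$ the removed set equals the $k_n$ largest increments and contains $\mcj_n^\star:=\{j:\D_jN\ge1\}$; consequently $\D_jX=\D_j\Xc$ for every retained $j$, so $\tilde{\al}_n^{k_n}$ is literally the continuous-part LSE restricted to $\{1,\dots,n\}\setminus\hat{\mcj}_n^{k_n}$.

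Finally I would bound $\sqrt{n}(\tilde{\al}_n^{k_n}-\tilde{\al}_n^{\mathrm{cont}})$, whose discrepancy comes only from the omitted indices, split into the $K_n:=\sharp\mcj_n^\star=O_p(T_n)$ genuine jump intervals and the $k_n-K_n$ over-trimmed continuous extremes. For the former, each omitted term contributes a typical $(\D_j\Xc)^2\asymp h_n$, so the induced shift is of order $T_n/n=h_n$, which is $o(n^{-1/2})$ since $\sqrt{n}\,h_n\asymp n^{1/2-\kappa}\to0$. For the latter, each over-trimmed squared increment is at most $v_n^2\lesssim h_n\log(1/h_n)$, and the number of such removals is itself constrained by the same acceptance inequality (excessive trimming of the upper tail would again drive the empirical kurtosis deviation past $O(\sqrt{n})$), so $\sqrt{n}$ times the resulting shift tends to zero; I would also record $k_n\le CT_n$ with high probability to control $(\sum_{j\notin\hat{\mcj}}\mbba\mbba^\top)^{-1}$. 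Combining, $\sqrt{n}(\tilde{\al}_n^{k_n}-\tilde{\al}_n^{\mathrm{cont}})=o_p(1)$ on the good acceptance event, and since $\tilde{\al}_n^{\mathrm{cont}}$ is $\sqrt{n}$-tight one picks $M$ and $N$ so that the oracle term and the discrepancy both stay below $M$ uniformly over $n\ge N$ outside an $\ep$-set. The hardest part is the over-trimming control in this last step: quantifying how the acceptance of the Jarque--Bera test simultaneously certifies that no jump survives and that only negligibly many continuous increments are discarded, and converting that certification into an $o_p(n^{-1/2})$ bound on the trimmed LSE.
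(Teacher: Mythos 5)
Your architecture is the same as the paper's: a good event on which no sampling interval carries two jumps (Lemma \ref{ojump}) and on which every jump increment dominates every continuous increment (Lemma \ref{cpick}, proved via $T_nP(|\xi_1|\le M\sqrt{n}h_n)\to0$); the certification that on the acceptance event no jump can remain unremoved, via divergence of the kurtosis part of $\mathrm{JB}_n^{k_n}$; and a comparison of the trimmed LSE with the continuous-part oracle in which the omitted indices contribute $o_p(n^{-1/2})$. The paper organizes this as a dichotomy on $\{1\le N_{T_n}\le k_n\}$ versus $\{k_n+1\le N_{T_n}\le(\lam+1)T_n\}$, using the acceptance event only in the second branch, but that is cosmetic. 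The over-trimming control in your last step is also simpler than you anticipate: the a priori bound $k_n\le(\lam+1)T_n$ (from the CLT for $N_{T_n}$) together with $\max_j|\D_jw|^2/h_n=O_p(\log n)$ gives $|\sqrt{n}\kappa_{2,n}|=O_p(\sqrt{nh_n^2}\vee k_n\log n/\sqrt{n})=o_p(1)$ directly, with no second appeal to the acceptance inequality.

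The one step that would fail as written is your argument that acceptance forces all jumps to be removed. You assert that a retained jump increment has self-normalized residual of order $h_n^{-1/2}$, ``an $O_p(1)$ increment against a normalizer of order $h_n^{1/2}$.'' But the relevant increment is the \emph{smallest} of the $O_p(T_n)$ jumps, and under \eqref{hm:add-3} nothing prevents $\min_j|\xi_j|$ from being $o(1)$; with your separation threshold $v_n\asymp\sqrt{h_n\log(1/h_n)}$ a retained jump only guarantees $(\hat{N}_j)^4\gtrsim(\log(1/h_n))^2$, and $n^{-1/2}(\log n)^2\to0$, so the kurtosis statistic need not exceed $\chi^2_q(2)$ and acceptance is not contradicted. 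What is actually needed is $\min_{1\le j\le\lf(\lam+1)T_n\rf}|\xi_j|\gtrsim(nh_n^2)^{1/4}$ with probability tending to one, i.e. $T_nP\bigl(|\xi_1|\lesssim(nh_n^2)^{1/4}\bigr)\to0$; this is where the full exponent $s>4(1-\kappa)/(2\kappa-1)$ of Assumption \ref{Asjsize} is consumed, whereas the separation step you attribute it to only requires the weaker $s>2(1-\kappa)/(2\kappa-1)$. Once that lower bound on the minimum jump size is in place, a single retained jump contributes $\gtrsim\sqrt{n}$ to $n^{-1/2}\sum_{j\notin\hat{\mcj}_n^{k_n}}((\hat{N}_j^{k_n})^4-3)$ and your divergence mechanism goes through exactly as in the paper's proof of \eqref{yu:C2}.
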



By re-defining $(\tilde{\al}_n^{k_n})$ as 
\begin{equation}
\tilde{\al}_n^{k_n}=\begin{cases}\tilde{\al}_n^{k_n} & \quad \mathrm{on} \ \ \left\{\mathrm{JB}_n^{k_n}\leq\chi^2_q(2)\right\}, \\
\al_0 & \quad \mathrm{on} \ \ \left\{\mathrm{JB}_n^{k_n}>\chi^2_q(2)\right\},
\end{cases}
\end{equation}
$(\tilde{\al}_n^{k_n})$ enjoys the property \eqref{ines}: $\sqrt{n}(\tilde{\al}_n^{k_n}-\al_0)=O_p(1)$ from Theorem \ref{thm_consis1}, so that by Theorem \ref{osan}, we have
\begin{align*}
\left(\sqrt{n}( \aes^{k_n, \mathrm{cont}}-\al_0), \sqrt{T_n}(\bes^{k_n, \mathrm{cont}}-\beta_0)\right)\cil N\left(0, \Sig_0\right),
\end{align*}
where
\begin{align*}
&\aes^{k_n, \mathrm{cont}}:=\tilde{\al}_n^{k_n}-\left(\sumj\frac{\mbba_{j-1}\mbba_{j-1}^\top}{(\mbba_{j-1}^\top\tilde{\al}_n^{k_n})^2}\right)^{-1}\sumj\left(\frac{1}{\mbba_{j-1}^\top\tilde{\al}_n^{k_n}}-\frac{(\D_j \Xc)^2}{h_n(\mbba_{j-1}^\top \tilde{\al}_n^{k_n})^2}\right)\mbba_{j-1},\\
&\bes^{k_n, \mathrm{cont}}:=\frac{1}{h_n}\left(\sumj\frac{\mbbb_{j-1}\mbbb_{j-1}^\top}{\mbba_{j-1}^\top\aes^{k_n}}\right)^{-1} \sumj \frac{\D_j \Xc}{\mbba_{j-1}^\top\aes^{k_n}}\mbbb_{j-1}.
\end{align*}

Recall that we finish our procedure once we have $\mathrm{JB}_n^{k_n}\leq\chi^2_q(2)$.
The following theorem is the main claim of this section.

\begin{Thm}\label{Ae}
Suppose that Assumptions \ref{Ascoef} to \ref{Asjsize} hold and that $\Sig_0$ in Theorem \ref{osan} is positive definite.
Then, for any $\ep>0$ and $q\in(0,1)$, we have
\begin{align}\label{ae}
&P\left(\left\{\left|\sqrt{n}(\aes^{k_n}-\aes^{k_n,\mathrm{cont}})\right|\vee\left|\sqrt{T_n}(\bes^{k_n}-\bes^{k_n,\mathrm{cont}})\right|>\ep\right\}\cap \left\{ \mathrm{JB}_n^{k_n}\leq\chi^2_q(2)\right\}\right)\to0.
\end{align}
\end{Thm}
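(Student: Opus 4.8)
The plan is to reduce everything to the event that the detection step has removed every jump-carrying interval, and then to exploit that the observed and oracle estimators are built from the \emph{same} preliminary quantities and differ only through the finitely many removed indices. Write $\mcj_n^{\mathrm{true}}:=\{j\le n:\ \D_j N\ge 1\}$ for the set of intervals that actually carry a jump, and set $G_n:=\{\mcj_n^{\mathrm{true}}\subseteq\hat{\mcj}_n^{k_n}\}$. Since $\Xc_t=X_t-X_0-\int_0^t c(X_{s-})dJ_s$ and $|c|$ is bounded away from $0$ by Assumption \ref{Ascoef}, a jump-free interval has $\D_j X=\D_j\Xc$; hence on $G_n$ one has $\D_j X=\D_j\Xc$ for every retained index $j\notin\hat{\mcj}_n^{k_n}$. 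The first step is to show
\[
P\left(G_n^c\cap\{\mathrm{JB}_n^{k_n}\le\chi^2_q(2)\}\right)\to0 .
\]
This is the genuinely delicate point, and it is the content of the detection analysis underlying Theorem \ref{Consistency}: a surviving jump has size of order $1$ while the retained Euler residuals are $O_p(\sqrt{h_n})$, so its normalized fourth moment would blow up the kurtosis part of $\mathrm{JB}_n^{k_n}$ and force rejection (in the spirit of Theorem \ref{Achi&p}-(2)); thus accepting while a jump remains is asymptotically negligible. The same circle of ideas, together with the Poisson count $N_{T_n}=O_p(T_n)$, supplies the bound $k_n=O_p(nh_n)=O_p(n^{1-\kappa})$, which is the only quantitative input needed below.

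On $G_n$ I would use that $\aes^{k_n}$ and $\aes^{k_n,\mathrm{cont}}$ are one-step corrections of the same preliminary $\tilde{\al}_n^{k_n}$, differing only in the index set of the sums (retained versus all of $\{1,\dots,n\}$) and in the use of $\D_j X$ versus $\D_j\Xc$. Writing the information and score parts as $\hat I^X,\hat S^X$ (sums over $j\notin\hat{\mcj}_n^{k_n}$, with $\D_j X$) and $\hat I^{\mathrm c},\hat S^{\mathrm c}$ (sums over $j\le n$, with $\D_j\Xc$), so that $\aes^{k_n}=\tilde{\al}_n^{k_n}-(\hat I^X)^{-1}\hat S^X$ and $\aes^{k_n,\mathrm{cont}}=\tilde{\al}_n^{k_n}-(\hat I^{\mathrm c})^{-1}\hat S^{\mathrm c}$, the equality of retained increments on $G_n$ collapses the differences onto the removed indices:
\begin{align*}
\hat S^{\mathrm c}-\hat S^X&=\sum_{j\in\hat{\mcj}_n^{k_n}}\left(\frac{1}{\mbba_{j-1}^\top\tilde{\al}_n^{k_n}}-\frac{(\D_j\Xc)^2}{h_n(\mbba_{j-1}^\top\tilde{\al}_n^{k_n})^2}\right)\mbba_{j-1},\\
\hat I^{\mathrm c}-\hat I^X&=\sum_{j\in\hat{\mcj}_n^{k_n}}\frac{\mbba_{j-1}\mbba_{j-1}^\top}{(\mbba_{j-1}^\top\tilde{\al}_n^{k_n})^2}.
\end{align*}
Both are sums over the $k_n$ removed indices of terms that are $O_p(1)$ (using $(\D_j\Xc)^2/h_n=O_p(1)$ and the bounds of Assumption \ref{Ascoef}), hence of order $O_p(k_n)$.

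Next I would carry out the rate bookkeeping for $\al$. By the ergodic theorem (Assumption \ref{Moments}) and $k_n/n\to0$, both $\hat I^X$ and $\hat I^{\mathrm c}$ are $O_p(n)$ with the positive-definite limit of Theorem \ref{osan}, so their inverses are $O_p(n^{-1})$; and $\hat S^{\mathrm c}=O_p(\sqrt n)$, since $(\hat I^{\mathrm c})^{-1}\hat S^{\mathrm c}=\tilde{\al}_n^{k_n}-\aes^{k_n,\mathrm{cont}}=O_p(n^{-1/2})$ by Theorem \ref{osan} and Theorem \ref{Consistency}. Splitting
\[
\aes^{k_n}-\aes^{k_n,\mathrm{cont}}=(\hat I^X)^{-1}(\hat S^{\mathrm c}-\hat S^X)+\left[(\hat I^{\mathrm c})^{-1}-(\hat I^X)^{-1}\right]\hat S^{\mathrm c},
\]
and using $(\hat I^{\mathrm c})^{-1}-(\hat I^X)^{-1}=(\hat I^{\mathrm c})^{-1}(\hat I^X-\hat I^{\mathrm c})(\hat I^X)^{-1}=O_p(k_n n^{-2})$, the two contributions are $O_p(k_n/n)$ and $O_p(k_n n^{-3/2}\cdot\sqrt n)=O_p(k_n/n)$. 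Multiplying by $\sqrt n$ and inserting $k_n=O_p(n^{1-\kappa})$ gives $O_p(n^{1/2-\kappa})=o_p(1)$, where $\kappa>1/2$ (Assumption \ref{Sampling}) is used decisively.

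Finally, for $\bes^{k_n}$ and $\bes^{k_n,\mathrm{cont}}$ the argument is identical in structure and in fact simpler, since both use the same plug-in $\aes^{k_n}$; here the removed-index remainder is $\hat V^{\mathrm c}-\hat V^X=\sum_{j\in\hat{\mcj}_n^{k_n}}(\mbba_{j-1}^\top\aes^{k_n})^{-1}\D_j\Xc\,\mbbb_{j-1}=O_p(k_n\sqrt{h_n})$ because $\D_j\Xc=O_p(\sqrt{h_n})$, while $\hat V^{\mathrm c}=O_p(T_n)$. Carrying the $h_n^{-1}$ prefactor through the same two-term split and multiplying by $\sqrt{T_n}=\sqrt{nh_n}$ yields contributions of order $O_p(k_n/\sqrt n)=O_p(n^{1/2-\kappa})$ and $O_p(k_n\sqrt{h_n}/\sqrt n)=O_p(n^{1/2-3\kappa/2})$, again $o_p(1)$ under $\kappa>1/2$. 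Since every estimate above is made on $G_n\cap\{\mathrm{JB}_n^{k_n}\le\chi^2_q(2)\}$ and $P(G_n^c\cap\{\mathrm{JB}_n^{k_n}\le\chi^2_q(2)\})\to0$, combining the $\al$- and $\beta$-bounds yields \eqref{ae}. I expect the main obstacle to be the first step, namely controlling $G_n^c$ on the acceptance event and securing the bound on $k_n$; the subsequent algebra is routine order bookkeeping driven entirely by $\kappa>1/2$.
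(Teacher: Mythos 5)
Your proposal follows the paper's own route: reduce to the event that every jump-bearing interval has been removed (the paper's $G_{k_n,n}$, built from Lemmas \ref{ojump} and \ref{cpick}), dispose of the complement intersected with the acceptance event via the blow-up of the kurtosis part of $\mathrm{JB}_n^{k_n}$ (the paper's \eqref{YU:prob2}), bound $k_n$ by $O(T_n)$ via the Poisson CLT, and then collapse the difference of the two one-step estimators onto the $k_n$ removed indices through the same two-term split (score difference over removed indices plus information-matrix difference). The structure, and the decisive use of $\kappa>1/2$, are exactly those of the paper.

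Two pieces of your order bookkeeping are too optimistic, though neither overturns the conclusion. First, the removed indices are by construction the \emph{largest} increments, so you cannot bound $(\D_j \Xc)^2/h_n$ by $O_p(1)$ termwise over $j\in\hat{\mcj}_n^{k_n}$; the correct uniform bound is $O_p(\log n)$ (extreme values of the Gaussian increments), which is why the paper's estimate \eqref{smoj} carries $k_n\log n/n$ rather than $k_n/n$. This costs only a logarithm and is absorbed by $\kappa>1/2$. Second, and more substantively, in the drift step $\mbbb_{j-1}$ is \emph{not} bounded --- Assumption \ref{Ascoef} bounds $\mbba$ and $c$ but imposes only Lipschitz (hence linear) growth on $\mbbb$ --- so $\sum_{j\in\hat{\mcj}_n^{k_n}}\D_j\Xc\,\mbbb_{j-1}=O_p(k_n\sqrt{h_n})$ does not follow off the shelf. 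The paper must split $\D_j\Xc$ into drift, corrector, and $a_{j-1}\D_j w$ pieces, invoke $E[\sup_{0\le t\le T_n}|X_t|^r]=O(T_n^{\ep})$ (its \eqref{YU:supesti2}) together with the elementary inequality \eqref{YU:ineq}, and use the full strength of \eqref{hm:add-4} (namely $n^{1+\del}h_n^{2+\del}(\log n)^2\to0$, strictly stronger than $nh_n^2\to0$) to control $T_n^{-1/2}\sum_{j\in\hat{\mcj}_n^{k_n}}a_{j-1}\D_j w\,\mbbb_{j-1}$. So the drift half, which you call simpler, is in fact the most delicate part of the paper's proof of \eqref{YU:exp2}; your conclusion stands, but only because Assumption \ref{Sampling} supplies precisely this extra margin.
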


\begin{Rem}
Since each phase of our method is conducted on the null hypothesis, we do not identify the true value in the re-defined $\tilde{\al}_n^{k_n}$ in practice.
\end{Rem}

\begin{Rem}
We should note that the number of jump removals is automatically determined by the iterative Jarque-Bera type test, and thus there is no need to choose $(k_n)$ in practice.
\end{Rem}

\section{Numerical experiments}\label{Numerical experiments}

\begin{table}[t]
\caption{The performance of our estimators is given in case (i).
The mean is given with the standard deviation in parenthesis. In this table, $k_n^\star$ denotes the number of jumps.}
\label{res3-1}
\begin{center}
\begin{tabular}{cccccccccc}
\hline
&&&&&&&&&\\[-3.5mm]
$T_n$ & $n$ & $h_n$ & $k_n^\star$ & \multicolumn{6}{c}{(i)$\text{Gamma distribution}$}  \\ 
&&&& $\aes^{0}$ & $\bes^{0}$ &$\aes^{k_n}$  &$\bes^{k_n}$  & $\aes^{k_n^\star}$&$\bes^{k_n^\star}$ \\ \hline
28.8& 1000 &
0.03&15&18.80&0.62&3.38&0.99&3.38&1.00 \\
&&&&(4.31)&(0.13)&(0.20)&(0.09)&(0.20)&(0.09)\\ 
62.1&10000&0.006&30&17.7&0.63&3.07&1.00&3.08&1.00\\  
&&&&(2.91)&(0.09)&(0.05)&(0.06)&(0.04)&(0.06)\\ \hline
\end{tabular}
\end{center}
\end{table}

\begin{table}[t]
\caption{The performance of our estimators is given in case (ii).
The mean is given with the standard deviation in parenthesis.
In this table, $k_n^\star$ denotes the number of jumps.}
\label{res3-2}
\begin{center}
\begin{tabular}{cccccccccc}
\hline
&&&&&&&&&\\[-3.5mm]
$T_n$ & $n$ & $h_n$ & $k_n^\star$ & \multicolumn{6}{c}{(ii)$\text{Bilateral inverse Gaussian distribution}$}  \\ 
&&&& $\aes^{0}$ & $\bes^{0}$ &$\aes^{k_n}$  &$\bes^{k_n}$  & $\aes^{k_n^\star}$&$\bes^{k_n^\star}$ \\ \hline
28.8& 1000 &
0.03&15&10.83&0.82&3.19&0.99&3.15&1.00 \\
&&&&(3.70)&(0.22)&(0.17)&(0.14)&(0.16)&(0.14)\\ 
62.1&10000&0.006&30&10.22&0.82&3.04&1.01&3.04&1.01\\  
&&&&(2.46)&(0.15)&(0.06)&(0.09)&(0.05)&(0.09)\\ \hline
\end{tabular}
\end{center}
\end{table}

In this section, we conduct Monte Carlo simulation in order to see the performance of our method.
First we consider the following statistical model:
\begin{equation}\label{yu:simmodel}
dX_t=\sqrt{\frac{\al}{1+\sin^2 X_t}}dw_t-\beta X_tdt+dJ_t\quad X_0=0,
\end{equation}
with the true value $\tz:=(\al_0,\beta_0)=(3,1)$.
As the jump size distributions, we set:
\begin{itemize}
\item[(i)] Gamma distribution $\Gam(4,1)$ (one-sided positive jumps);
\item[(ii)] Bilateral inverse Gaussian distribution $bIG(2,1,4,1)$ (two-sided jumps).
\end{itemize}
The bilateral inverse Gaussian random variable $X\sim bIG(\del_1,\gam_1,\del_2,\gam_2)$ is defined as the difference of two independent inverse Gaussian random variable $X_1\sim IG(\del_1,\gam_1)$ and $X_2\sim IG(\del_2,\gam_2)$.
In the trials, we set the significance level $q=10^{-3}$, and the number of jumps fixed just for purposes of comparison.

Based on independently simulated 1000 sample path, the mean and standard deviation of our estimator $(\aes^{k_n},\bes^{k_n})$ are tabulated in Table \ref{res3-1} and Table \ref{res3-2} with the estimators $(\aes^0,\bes^0)$ and $(\aes^{k_n^\star},\bes^{k_n^\star})$.
The first estimator $(\aes^0,\bes^0)$ is constructed by the whole data, and the latter estimator $(\aes^{k_n^\star},\bes^{k_n^\star})$ is constructed by the true no-jump group. 

From these tables, the following items are indicated:
\begin{itemize}
\item In both case, the modified estimators get closer and closer to the true value as jump removals proceed.
\item Since the performances of $(\aes^{k_n},\bes^{k_n})$ and $(\aes^{k_n^\star},\bes^{k_n^\star})$ are almost the same, the jump detection by our method works well.
\item Concerning the drift estimator, the degree of improvement is not large for (ii) relative to (i). It may be due to the two-sided jump structure of $bIG(2,1,4,1)$; thus the amount of improvement is generally expected to be much more significant when the jump distribution is skewed. 
\item In the estimator $(\aes^0,\bes^0)$, the performance of $\aes^0$ is worse than $\bes^0$.
This is because the diffusion estimator is based on the square of the increments $(\D_j X)_j$, thus being heavily affected by jumps.
\item Overall, the diffusion parameter are overestimated even by $\aes^{k_n^\star}$.
Taking into consideration the fact that the mean-reverting point of $X$ is 0, the magnitude of the increment should be larger after one jump occurs.
Thus, although jumps are correctly picked, such overestimation can be seen.
\end{itemize}

\bigskip

\noindent
\textbf{Acknowledgement.}
This work was supported by JST, CREST Grant Number JPMJCR14D7, Japan.

\bigskip
\appendix

\section{Proofs of the result in Section \ref{Asymptotic Results}}
\label{hm:sec_proofs}

Throughout this section, Assumptions \ref{Ascoef} to \ref{Asjsize} are in force. 

\subsection{Technical remarks}
\label{hm:sec_pre-rems}

We summarize a few consequences of Assumptions \ref{Sampling} and \ref{Asjsize}.
All of them will be used later on.

\begin{itemize}
\item Under Assumption \ref{Sampling}, $T_n\to\infty$ and there exists a positive constant $\del=\del(\kappa,\kappa')\in(0,1)$ such that
\begin{equation}
\frac{\log n}{T_n} \vee \left(n^{1+\del}h_n^{2+\del}(\log n)^2\right) \to 0.
\label{hm:add-4}
\end{equation}
This entails $nh_n^2\log n=o\left(T_n^{-\del}/\log n\right)=o(1)$, stronger than the so-called ``rapidly increasing design'': $nh_n\to\infty$ and $nh_n^2\to0$, which is one of standard conditions in the literature of statistical inference for ergodic processes based on high-frequency data.
Assumption \ref{Sampling} will be required for handling the extreme value of the solution process $X$, and for asymptotically allowing the number of jump-removal operations to exceed the expected number of jump times.

\item 
Introduce the positive sequence
\begin{equation}
a_n = a_n(\eta):=T_n^{\eta}
\label{hm:an_def}
\end{equation}
for a constant $\eta>0$.
We see that, with a sufficiently small $\eta$, the following statements follow from Assumptions \ref{Sampling} and \ref{Asjsize}:
\begin{align}
& a_n^3\sqrt{h_n\log n} \to 0, \label{hm:conseq-1}\\
& T_n P\left( |\xi_1|\leq M \sqrt{n}h_n \right) \to 0, \label{hm:conseq-3} \\
& \max_{1\leq j \leq \lf T_n\rf} |\xi_j| = O_p(a_n). \label{hm:conseq-2}
\end{align}
Under Assumption \ref{Sampling}, we can pick an $\eta>0$ small enough to ensure \eqref{hm:conseq-1}; in the sequel, we fix this $\eta$.
Also, we note
\begin{align}
& T_n P\left( |\xi_1|\leq M \sqrt{n}h_n \right) \lesssim n^{-s\kappa/2 + (1-\kappa)(1+s/2)}\to0,
\nonumber
\end{align}
from Assumption \ref{Asjsize}-(2).
As for \eqref{hm:conseq-2}, observe that for any $\ep>0$
\begin{equation*}
P\left(a_n^{-1}\max_{1\leq j \leq \lf T_n\rf} |\xi_j| > \ep\right)=1-\left\{1-P(|\xi_1|> a_n\ep)\right\}^{\lf T_n\rf}.
\end{equation*}
The right-hand side tends to $0$ if the upper bound (due to Markov's inequality) of
\begin{equation}
T_n P(|\xi_1|> a_n\ep) \lesssim n^{1-\kappa-q\eta(1-\kappa')}
\nonumber
\end{equation}
tends to $0$. This holds true under Assumption \ref{Asjsize}-(1) by taking a large constant $q$.

\end{itemize}


\medskip

For abbreviation, we will use the following notations:
\begin{itemize}
\item For any matrix $S=\{S_{kl}\}$, we denote by $|S|:=(\sum_{k,l}S_{kl}^2)^{1/2}$ its Frobenius norm.
\item $I_p$ represents the $p$-dimensional identity matrix.
\item $R(x)$ denotes a differentiable matrix-valued function on $\mbbr$ for which there exists a constant $C>0$ such that $|R(x)|+|\p_x R(x)|\lesssim (1+|x|)^C$, $x\in\mbbr$.
\item We often write $a(x,\al)$ and $b(x,\beta)$ instead of $\sqrt{(\mbba(x))^\top\al}$ and $(\mbbb(x))^\top\beta$.
\item $E^{j-1}[\cdot]$ denotes the conditional expectation with respect to $\mcf_{t_{j-1}}$.
\item We often omit the true value $\tz$, for instance, $a_s$ and $a_{j}$ denote $a(X_s,\al_0)$ and $a(X_{t_j},\al_0)$, respectively.
\end{itemize}

\subsection{Proof of Theorem \ref{osan}}

Let us recall that
\begin{align*} 
\Xc_t:=X_t 
- X_0
-\int_0^t c(X_{s-})dJ_s=\int_0^t a(X_s,\al_0)dw_t+\int_0^t b(X_s,\beta_0)dt.
\end{align*}
First we prove a preliminary lemma.

\begin{Lem}\label{ana}
The $(p_\al+p_\beta)$-dimensional random sequence
\begin{align*}
\left(\frac{1}{\sqrt{n}}\sumj\left(\frac{1}{\mbba_{j-1}^\top\al_0}-\frac{(\D_j \Xc)^2}{h_n(\mbba_{j-1}^\top \al_0)^2}\right)\mbba_{j-1}, \,
\frac{1}{\sqrt{T_n}}\sumj \frac{\D_j \Xc-h_n\mbbb_{j-1}^\top\beta_0}{\mbba_{j-1}^\top\al_0}\mbbb_{j-1}\right)
\end{align*}
weakly converges to the centered normal distribution with covariance matrix
\begin{equation*}
\begin{pmatrix}\displaystyle2\int \left(\frac{\mbba(x)}{(\mbba(x))^\top\al_0}\right)^{\otimes2}\pi_0(dx) &\displaystyle O \\\displaystyle O &\displaystyle \int \frac{\mbbb^{\otimes2}(x)}{\mbba(x)^\top\al_0}\pi_0(dx)\end{pmatrix}.
\end{equation*}
\end{Lem}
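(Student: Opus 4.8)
The plan is to recast both normalized sums as partial sums of a single triangular array of martingale differences, apply a martingale central limit theorem, and read off the limiting covariance via the ergodic theorem of Assumption \ref{Moments}-(1). Since $\Xc$ has no jumps, $\D_j\Xc=\intj a(X_s,\al_0)\,dw_s+\intj b(X_s,\beta_0)\,ds$. Writing $a_{j-1}=\sqrt{\mbba_{j-1}^\top\al_0}$ (so $a_{j-1}^2=\mbba_{j-1}^\top\al_0$), the first component of the target vector is $\frac{1}{\sqrt{n}}\sumj\frac{\mbba_{j-1}}{a_{j-1}^2}\left(1-\frac{(\D_j\Xc)^2}{h_n a_{j-1}^2}\right)$ and the second is $\frac{1}{\sqrt{T_n}}\sumj\frac{\mbbb_{j-1}}{a_{j-1}^2}\left(\D_j\Xc-h_n b_{j-1}\right)$.

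First I would isolate the leading stochastic terms by expanding $\D_j\Xc=a_{j-1}\D_j w+r_j$, where $\D_j w=w_{t_j}-w_{t_{j-1}}$ and $r_j$ collects the drift contribution and the in-interval spatial variation of the diffusion coefficient (including the rare intervals on which $X$ itself jumps, changing $a(X_s,\al_0)$ away from $a_{j-1}$). Using standard conditional-moment estimates for diffusions together with Assumptions \ref{Moments}-(2) and \ref{Asjsize}-(1), I would bound $E^{j-1}[|r_j|^q]$ and show that, after the $1/\sqrt{n}$ and $1/\sqrt{T_n}$ normalizations, every $r_j$-contribution is $o_p(1)$. The decisive point is that the conditional-mean bias of $(\D_j\Xc)^2/h_n$ relative to $a_{j-1}^2$ is of order $h_n$, so its contribution to the first (diffusion) sum is of order $\sqrt{n}\,h_n=O(n^{1/2-\kappa})\to0$ precisely because $\kappa>1/2$ in Assumption \ref{Sampling}; the drift remainder in the second sum is handled analogously.

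Next, with the leading martingale differences
$$\zeta_j^{(1)}:=\frac{1}{\sqrt{n}}\,\frac{\mbba_{j-1}}{a_{j-1}^2}\left(1-\frac{(\D_j w)^2}{h_n}\right),\qquad \zeta_j^{(2)}:=\frac{1}{\sqrt{T_n}}\,\frac{\mbbb_{j-1}}{a_{j-1}}\,\D_j w,$$
I would verify the ingredients of the martingale CLT for $\zeta_j=(\zeta_j^{(1)},\zeta_j^{(2)})$. Each is a genuine $\mcf_{t_j}$-martingale difference since $E^{j-1}[1-(\D_j w)^2/h_n]=0$ and $E^{j-1}[\D_j w]=0$, so there is no bias to control at this stage. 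Writing $Z=\D_j w/\sqrt{h_n}\sim N(0,1)$, the Gaussian identities $E[(1-Z^2)^2]=2$ and $E[(1-Z^2)Z]=0$ give $\sumj E^{j-1}[\zeta_j^{(1)}(\zeta_j^{(2)})^\top]=o_p(1)$, which produces the block-diagonal structure, while
$$\sumj E^{j-1}\big[\zeta_j^{(1)}(\zeta_j^{(1)})^\top\big]=\frac{2}{n}\sumj\left(\frac{\mbba_{j-1}}{a_{j-1}^2}\right)^{\otimes2},\qquad \sumj E^{j-1}\big[\zeta_j^{(2)}(\zeta_j^{(2)})^\top\big]=\frac{1}{n}\sumj\frac{\mbbb_{j-1}^{\otimes2}}{a_{j-1}^2},$$
the latter using $E^{j-1}[(\D_j w)^2]=h_n$ and $T_n=nh_n$. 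Rewriting these as $T_n^{-1}\int_0^{T_n}(\cdot)\,ds$ up to a Riemann error controlled by a continuity/moment estimate, and invoking Assumption \ref{Moments}-(1), both converge in probability to the two diagonal blocks of the asserted covariance matrix.

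Finally I would check the Lyapunov condition $\sumj E^{j-1}[|\zeta_j|^4]\cip0$, which follows from Gaussian moment bounds for $\D_j w$ and finiteness of all moments of $X$ (Assumption \ref{Moments}-(2)), each summand being of conditional order $n^{-2}$. The martingale CLT then yields $\sumj\zeta_j\cil N(0,\cdot)$ with the stated block-diagonal covariance, and Step one identifies the target vector with $\sumj\zeta_j+o_p(1)$. I expect the main obstacle to be the control of the remainder array $r_j$ in the first step, in particular verifying that the intervals on which the underlying jump-diffusion $X$ jumps contribute negligibly despite the frozen coefficient $a_{j-1}$ then misrepresenting $a(X_s,\al_0)$ inside the interval; this again hinges on $\sqrt{n}\,h_n\to0$ together with the moment bounds of Assumptions \ref{Moments}-(2) and \ref{Asjsize}-(1).
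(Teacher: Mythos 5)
Your proposal is correct and follows essentially the same route as the paper: a martingale central limit theorem for the triangular array, with the limiting block-diagonal covariance identified through the conditional moment estimates $E^{j-1}[(\D_j \Xc)^2]=h_n a_{j-1}^2+O(h_n^2)$, $E^{j-1}[(\D_j\Xc)^4]=3h_n^2a_{j-1}^4+O(h_n^{5/2})$ and the ergodic theorem, the bias terms being killed by $\sqrt{n}h_n\to0$. The only (cosmetic) difference is that you first peel off the exactly centered leading terms $a_{j-1}\D_j w$ and push all discretization errors into a remainder, whereas the paper keeps $\D_j\Xc$ intact and verifies the conditional-mean, conditional-variance, cross-moment and Lyapunov conditions for that array directly via It\^o's formula.
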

\begin{proof}
By the Cram\'{e}r-Wold device, it is enough to show the case where $p_\al=p_\beta=1$.
From the martingale central limit theorem, the desired result follows if we show
\begin{align}
&\label{ford}\frac{1}{\sqrt{n}}\sumj E^{j-1}\left[\left(\frac{1}{a_{j-1}^2}-\frac{(\D_j \Xc)^2}{h_na_{j-1}^4}\right)\mbba_{j-1}\right]\cip0,\\
&\label{sord}\frac{1}{n}\sumj E^{j-1}\left[\left\{\left(\frac{1}{a_{j-1}^2}-\frac{(\D_j \Xc)^2}{h_na_{j-1}^4}\right)\mbba_{j-1}\right\}^2\right]\cip 2\int \left(\frac{\mbba(x)}{a^2(x,\al_0)}\right)^2\pi_0(dx),\\
&\label{foord}\frac{1}{n^2}\sumj E^{j-1}\left[\left\{\left(\frac{1}{a_{j-1}^2}-\frac{(\D_j \Xc)^2}{h_na_{j-1}^4}\right)\mbba_{j-1}\right\}^4\right]\cip0,\\
&\label{bford}\frac{1}{\sqrt{T_n}}\sumj E^{j-1}\left[\frac{\D_j \Xc-h_nb_{j-1}}{a^2_{j-1}}\mbbb_{j-1}\right]\cip0,\\
&\label{bsord}\frac{1}{T_n}\sumj E^{j-1} \left[\left(\frac{\D_j \Xc-h_nb_{j-1}}{a^2_{j-1}}\mbbb_{j-1}\right)^2\right]\cip \int \frac{\mbbb^2(x)}{a^2(x,\al_0)}\pi_0(dx),\\
&\label{bfoord}\frac{1}{(T_n)^2}\sumj E^{j-1} \left[\left(\frac{\D_j \Xc-h_nb_{j-1}}{a^2_{j-1}}\mbbb_{j-1}\right)^4\right]\cip0\\
&\label{cross}\frac{1}{n\sqrt{h_n}}\sumj E^{j-1} \left[\left(\frac{1}{a_{j-1}^2}-\frac{(\D_j \Xc)^2}{h_na_{j-1}^4}\right)\frac{\D_j \Xc-h_nb_{j-1}}{a^2_{j-1}}\mbba_{j-1}\mbbb_{j-1}\right]\cip0.
\end{align}
By using the martingale property of the stochastic integral, Jensen's inequality, the Lipschitz continuity of $b$, and \cite[Lemma 4.5]{Mas13-1}, we have
\begin{align}\label{fice}
E^{j-1}[\D_j \Xc]=h_n b_{j-1}+\intj E^{j-1}[b_s-b_{j-1}]ds=h_nb_{j-1}+h_n^{\frac{3}{2}}R_{j-1}.
\end{align}
It\^{o}'s formula and Fubini's theorem for conditional expectation yield that
\begin{align*}
&E^{j-1}[(\D_j \Xc)^2]\\
&=E^{j-1}\left[2\intj (\Xc_s-\Xc_{j-1})d\Xc_s+\intj (a_s^2-a^2_{j-1})ds+a^2_{j-1}h_n\right]\\
&=a^2_{j-1}h_n+2\intj \left(\int_{t_{j-1}}^sE^{j-1}\left[b_ub_s\right]du\right)ds+\intj E^{j-1}[a_s^2-a^2_{j-1}]ds.
\end{align*}
Again making use of the Lipschitz continuity of $b(x,\beta_0)$ and \cite[Lemma 4.5]{Mas13-1}, we get
\begin{align*}
&\left|\intj \left(\int_{t_{j-1}}^sE^{j-1}\left[b_ub_s\right]du\right)ds\right|\\
&\lesssim \intj \left(\int_{t_{j-1}}^sE^{j-1}\left[1+|X_u|+|X_s|+|X_u||X_s|\right]ds\right)ds\\
&\lesssim h_n^2(1+|X_{j-1}|^2).
\end{align*}
Since $\p_x a^2(x,\al)$ and $\p_x^2 a^2(x,\al)$ are of at most polynomial growth with respect to $x$ uniformly in $\al$, we can similarly deduce that
\begin{align*}
&\left|\intj E^{j-1}[a_s^2-a^2_{j-1}]ds\right|\\
&\lesssim \bigg|\intj E^{j-1}\bigg[\p_x a^2_{j-1}(X_s-X_{j-1}) \nn\\
&{}\qquad +\frac{1}{2}\int_0^1\int_0^1 \p_x^2 a^2(X_{j-1}+uv(X_s-X_{j-1}),\al_0)dudv(X_s-X_{j-1})^2\bigg]ds\bigg| \\
&\lesssim \left|\intj  \left(\int_{t_{j-1}}^sE^{j-1}[b_u]du\right)ds\p_x a^2_{j-1}\right| \nn\\
&{}\qquad + \intj E^{j-1}\left[\left(1+|X_{j-1}|^C+|X_s-X_{j-1}|^C\right)(X_s-X_{j-1})^2\right]ds\\
&\lesssim h_n^2 R_{j-1}.
\end{align*}
Hence
\begin{equation}\label{sce}
E^{j-1}[(\D_j \Xc)^2]=h_na^2_{j-1}+h_n^2R_{j-1}.
\end{equation}
For any $q\geq 2$, Burkholder's inequality for conditional expectation gives
\begin{equation}\label{ece}
E^{j-1}\left[ |\D_j \Xc|^q \right]\lesssim h_n^{\frac{q}{2}} R_{j-1}.
\end{equation}
Repeatedly using It\^{o}'s formula and \eqref{ece}, we have
\begin{align}\label{fce}
&E^{j-1}[(\D_j \Xc)^4]\nonumber\\
&=E^{j-1}\left[4\intj(\Xc_s-\Xc_{j-1})^3d\Xc_s+6\intj (\Xc_s-\Xc_{j-1})^2a_s^2ds\right]\nonumber\\
&=6E^{j-1}\Bigg[\intj \left\{2\int_{t_{j-1}}^s (\Xc_u-\Xc_{j-1})d\Xc_u+\int_{t_{j-1}}^s (a_u^2-a^2_{j-1})du+(s-t_{j-1})a^2_{j-1}\right\} dsa^2_{j-1}\nonumber\\
&\quad \quad+\intj \left\{2\int_{t_{j-1}}^s (\Xc_u-\Xc_{j-1})d\Xc_u +\int_{t_{j-1}}^s a_u^2du
\right\} (a_s^2-a_{j-1}^2)ds\Bigg] \nn\\
&{}\qquad +h_n^{\frac{5}{2}}R_{j-1}\nonumber\\
&=3h_n^2a_{j-1}^4+h_n^{\frac{5}{2}}R_{j-1}.
\end{align}
In particular, it follows from \eqref{sce} and \eqref{fce} that
\begin{align}\label{hm:add-1}
E^{j-1}\left[\left\{\left(\frac{1}{a_{j-1}^2}-\frac{(\D_j \Xc)^2}{h_na_{j-1}^4}\right)\mbba_{j-1}\right\}^2\right]
=\frac{2}{a_{j-1}^4} \mbba_{j-1}^{\otimes 2}+h_n^{\frac{1}{2}}R_{j-1}.
\end{align}
Now, the convergences \eqref{ford}-\eqref{cross} follow from the expressions \eqref{fice}-\eqref{ece} and \eqref{hm:add-1} together with the ergodic theorem (Assumption \ref{Moments}-(1)).
Thus we obtain the desired result. 
\end{proof}

\medskip


Applying Taylor's expansion, we have, with the obvious notation,
\begin{align*}
\frac{1}{n}\sumj\frac{\mbba_{j-1}\mbba_{j-1}^\top}{(\mbba_{j-1}^\top\check{\al}_n)^2}=\frac{1}{n}\sumj\frac{\mbba_{j-1}\mbba_{j-1}^\top}{(\mbba_{j-1}^\top\al_0)^2}+\left\{\frac{1}{n}\int_0^1\sumj\p_\al\left(\frac{\mbba_{j-1}\mbba_{j-1}^\top}{\left[\mbba_{j-1}^\top(\al_0+u(\check{\al}_n-\al_0))\right]^2}\right)du\right\}[\check{\al}_n-\al_0].
\end{align*}
The ergodic theorem implies that
\begin{equation}
\frac{1}{n}\sumj\frac{\mbba_{j-1}\mbba_{j-1}^\top}{(\mbba_{j-1}^\top\al_0)^2} \cip \int \left(\frac{\mbba(x)}{\mbba(x)^\top\al_0}\right)^{\otimes2}\pi_0(dx),
\nonumber
\end{equation}
the limit being positive definite.
From Assumption \ref{Ascoef} and the condition $\sqrt{n}(\check{\al}_n-\al_0)=O_p(1)$, the second term of the right-hand-side is $o_p(1)$.
We also have
\begin{align*}
&\frac{1}{\sqrt{n}}\sumj\left(\frac{1}{\mbba_{j-1}^\top\check{\al}_n}-\frac{(\D_j \Xc)^2}{h_n(\mbba_{j-1}^\top \check{\al}_n)^2}\right)\mbba_{j-1}\\
&=\frac{1}{\sqrt{n}}\sumj\left(\frac{1}{\mbba_{j-1}^\top\al_0}-\frac{(\D_j \Xc)^2}{h_n(\mbba_{j-1}^\top \al_0)^2}\right)\mbba_{j-1}\\
&+\left\{\frac{1}{n}\sumj\left(-\frac{1}{(\mbba_{j-1}^\top\al_0)^2}+2\frac{(\D_j \Xc)^2}{h_n(\mbba_{j-1}^\top \al_0)^3}\right)\mbba_{j-1}\mbba_{j-1}^\top\right\}[\sqrt{n}(\check{\al}_n-\al_0)]+o_p(1).
\end{align*}
By \eqref{sce}, \eqref{fce}, and \cite[Lemma 9]{GenJac93}, it follows that
\begin{equation*}
\frac{1}{n}\sumj\left(-\frac{1}{(\mbba_{j-1}^\top\al_0)^2}+2\frac{(\D_j \Xc)^2}{h_n(\mbba_{j-1}^\top \al_0)^3}\right)\mbba_{j-1}\mbba_{j-1}^\top\cip\int \left(\frac{\mbba(x)}{(\mbba(x))^\top\al_0}\right)^{\otimes2}\pi_0(dx).
\end{equation*}
Hence we obtain
\begin{align*}
\sqrt{n}(\hat{\al}^{\mathrm{cont}}_n-\al_0)=-\left\{\int \left(\frac{\mbba(x)}{(\mbba(x))^\top\al_0}\right)^{\otimes2}\pi_0(dx)\right\}^{-1}\frac{1}{\sqrt{n}}\sumj\left(\frac{1}{\mbba_{j-1}^\top\al_0}-\frac{(\D_j \Xc)^2}{h_n(\mbba_{j-1}^\top \al_0)^2}\right)\mbba_{j-1}+o_p(1),
\end{align*}
and similarly we have
\begin{equation*}
\sqrt{T_n}(\bes^{\mathrm{cont}}-\beta_0)=\left(\int \frac{\mbbb^{\otimes2}(x)}{\mbba(x)^\top\al_0}\pi_0(dx)\right)^{-1}\frac{1}{\sqrt{T_n}}\sumj \frac{\D_j \Xc-h_nb_{j-1}}{\mbba_{j-1}^\top\al_0}\mbbb_{j-1}+o_p(1).
\end{equation*}
Theorem \ref{osan} now follows from applying Slutsky's lemma and Lemma \ref{ana} to these two equations.

\medskip

Before we proceed to the proof of Theorem \ref{Consistency} and Theorem \ref{Ae}, we comment on a virtual upper bound of $k_n$ and $N_{T_n}$.
By the Lindeberg-Feller theorem we have
\begin{equation*}
\frac{N_{T_n}-\lam T_n}{\sqrt{\lam T_n}} =\sumj\frac{\D_j N - \lam h_n}{\sqrt{\lam T_n}} \cil N(0,1),
\end{equation*}
so that for any positive nondecreasing sequence $(l_n)$ satisfying $\frac{l_n-\lam T_n}{\sqrt{\lam T_n}}\to\infty$, 
we have
\begin{equation}
P\left(N_{T_n}\geq l_n\right)=P\left(\frac{N_{T_n}-\lam T_n}{\sqrt{\lam T_n}}\geq\frac{l_n-\lam T_n}{\sqrt{\lam T_n}}\right)\to0;
\nn
\end{equation}
in particular, this implies that the probability of the event $\{N_{T_n}\geq (\lam+1)T_n\}$ is asymptotically negligible. Thus, we hereafter set $k_n\leq (\lam+1)T_n-1=O(T_n)$, and replace the event $\{N_{T_n}\geq k_n+1\}$ by $\{ k_n+1\leq N_{T_n}\leq (\lam+1)T_n\}$ without any mention.

\subsection{Proof of Theorem \ref{Consistency}}
Since it is easy to deduce that for a fixed $M>0$,
\begin{align*}
&P\left(\left\{|\sqrt{n}(\tilde{\al}_n^{k_n}-\al_0)|>M\right\}\cap\left\{\mathrm{JB}_n^{k_n}\leq\chi^2_q(2)\right\}\right)\\
&\leq P\left(\left\{|\sqrt{n}(\tilde{\al}_n^{k_n}-\al_0)|>M\right\}\cap\left\{1\leq N_{T_n}\leq k_n\right\}\right) + P\left(\left\{k_n+1\leq N_{T_n}\leq (\lam+1)T_n\right\}\cap\left\{\mathrm{JB}_n^{k_n}\leq\chi^2_q(2)\right\}\right)+o(1),
\end{align*}
the desired result follows if we show that for any $\ep>0$, there exist positive constants $M$ and $N\in\mbbn$ satisfying
\begin{align}
&\sup_{n\ge N}P\left(\left\{|\sqrt{n}(\tilde{\al}_n^{k_n}-\al_0)|>M\right\}\cap\left\{1\leq N_{T_n}\leq k_n\right\}\right)<\ep, \label{yu:C1}\\
&\sup_{n\ge N}P\left(\left\{k_n+1\leq N_{T_n}\leq (\lam+1)T_n\right\}\cap\left\{\mathrm{JB}_n^{k_n}\leq\chi^2_q(2)\right\}\right)<\ep. \label{yu:C2}
\end{align}
From now on, we separately prove them with introducing some fundamental lemmas.

\subsubsection{Proof of \eqref{yu:C1}}
We will write $\{\tau_i\}_{i\in\mbbn}$ for jump times of $N$, and $B_n$ for the event that the Poisson process $N$ does not have more than one jumps over all $[t_{j-1},t_j)$, $j=1,\dots,n$:
\begin{align}
B_{n} &:= \left\{^\exists i\in\mbbn, ^\exists j\in\{1,\dots,n\} \ \ \mbox{s.t.} \ \ \tau_i, \tau_{i+1}\in[t_{j-1},t_j)\right\}^c.
\nn
\end{align}

\begin{Lem} \label{ojump}
$P(B_n) = 1 - O(nh_n^2)$.
\end{Lem}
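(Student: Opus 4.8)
The plan is to rewrite the bad event $B_n^c$ as a union of simple ``two-jumps-in-one-cell'' events and then apply Boole's inequality. First I would observe that $B_n^c$ occurs precisely when some subinterval $[t_{j-1},t_j)$ contains at least two jump times of $N$: if an interval carries two or more jumps then it carries two \emph{consecutive} ones $\tau_i,\tau_{i+1}$, and conversely. Writing $\D_j N$ for the number of jumps of $N$ in $[t_{j-1},t_j)$, this yields the identity
\begin{equation*}
B_n^c=\bigcup_{j=1}^n\{\D_j N\ge 2\},
\end{equation*}
whence $P(B_n^c)\le\sum_{j=1}^n P(\D_j N\ge 2)$ by the union bound. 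Because the sampling grid is equally spaced and $N$ is a homogeneous Poisson process, the increments $\D_j N$ are identically distributed (in fact i.i.d., although only the common marginal is needed here).

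The key quantitative step is to control a single term $P(\D_1 N\ge 2)$. Since $\D_1 N$ is $\mathrm{Poisson}(\lam h_n)$,
\begin{equation*}
P(\D_1 N\ge 2)=1-e^{-\lam h_n}-\lam h_n e^{-\lam h_n}=1-e^{-\lam h_n}(1+\lam h_n).
\end{equation*}
Taylor-expanding $e^{-\lam h_n}$ as $h_n\to 0$ gives $e^{-\lam h_n}(1+\lam h_n)=1-\tfrac12\lam^2h_n^2+O(h_n^3)$, so that $P(\D_1 N\ge 2)=\tfrac12\lam^2 h_n^2+O(h_n^3)=O(h_n^2)$.

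Combining the two displays yields $P(B_n^c)\le n\,P(\D_1 N\ge 2)=n\cdot O(h_n^2)=O(nh_n^2)$, and therefore $P(B_n)=1-O(nh_n^2)$, as claimed. There is essentially no genuine obstacle here: the lemma is elementary, and the only points requiring a word of care are the combinatorial reformulation of $B_n^c$ (recognizing that two consecutive jumps falling in one cell is equivalent to that cell holding $\ge 2$ jumps) and the uniformity of the $O(h_n^2)$ bound across the $n$ cells, which is immediate from the stationarity of the increments.
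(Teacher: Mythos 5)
Your proof is correct, but it takes a genuinely different route from the paper's. You reduce $B_n^c$ to the union $\bigcup_{j}\{\D_j N\ge 2\}$ (the reformulation via consecutive jump times is valid, since the ordered jump times lying in an interval are necessarily consecutive in index), and then you use only the marginal law $\D_j N\sim\mathrm{Poisson}(\lam h_n)$ together with the union bound, getting $P(\D_1 N\ge 2)=1-e^{-\lam h_n}(1+\lam h_n)\le \tfrac12\lam^2h_n^2$ and hence $P(B_n^c)\lesssim nh_n^2$. The paper instead conditions on $\{N_{T_n}=i\}$, invokes the representation of the jump times as order statistics of $i$ i.i.d.\ uniform variables on $(0,1)$, bounds the probability that some spacing is smaller than $h_n/T_n=1/n$ using the spacing density $s\mapsto i(1-s)^{i-1}$, and then sums over $i$ to arrive at the same bound $O(T_n^2/n)=O(nh_n^2)$. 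Your argument is shorter and more elementary, exploiting the independent-increments structure of the homogeneous Poisson process directly; the paper's conditional order-statistics argument is heavier machinery for this particular lemma, though it is the natural tool if one later needs finer control of the configuration of jump times given their total number. Both give the same rate, so there is no loss in your approach here.
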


\begin{proof}
For each $i\ge 2$, the conditional distribution of $(\tau_1/T_n,\dots,\tau_i/T_n)$ given the event $\{N_{T_n}=i\}$ equals that of the order statistics $U_{(1)}\le\dots\le U_{(i)}$ of $k$ i.i.d. $(0,1)$-uniformly distributed random variables \cite[Proposition 3.4]{Sat99}. Moreover, each spacing $U_{(i+1)}-U_{(i)}$ admits the density $s\mapsto i(1-s)^{i-1}$, $0<s<1$, e.g. \cite{Pyk65}. Then,
\begin{align}
P(B_n^c) &= \sum_{i=2}^\infty P(N_{T_n}=i) P( B_n^c \, |\, N_{T_n}=i ) \nn\\
&\le \sum_{i=2}^\infty P(N_{T_n}=i) P\big( {}^\exists j\in\{2,\dots, i\} \ \ \mbox{s.t.} \ \ \tau_{i}-\tau_{i-1}<h_n \, \big| \, N_{T_n}=i \big) \nn\\
&\le \sum_{i=2}^\infty P(N_{T_n}=i) \times (i-1) \int_0^{1/n} i(1-s)^{i-1}ds \nn\\
&\lesssim \sum_{i=2}^\infty e^{-\lam T_n} \frac{(\lam T_n)^{i}}{(i-2)!}\times \frac1n \lesssim \frac{T_n^2}{n} =nh_n^2.
\nonumber
\end{align}
\end{proof}

Let
\begin{align}
C_{k,n} &:=\left\{ {}^\exists i\in\mbbn,\, {}^\exists j\in\{1,\dots,n\} \ \ \mbox{s.t.} \ \ \tau_i \in [t_{j-1},t_j) \ \ \mbox{and} \ \  j\notin\hat{\mcj}_n^{k} \right\}^c,
\nonumber
\end{align}
denote the event where all jumps up to time $T_n$ are correctly removed.
The next lemma shows the asymptotic negligibility of the failure-to-detection rate on the event $\left\{1\leq N_{T_n}\leq k_n \right\}\cap B_n$.

\begin{Lem} \label{cpick}
$P(C_{k_n,n}^{\,c}\cap \left\{1\leq N_{T_n}\leq k_n \right\}\cap B_n) \to 0$.
\label{proof_lem2}
\end{Lem}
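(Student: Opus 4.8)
The plan is to reduce the statement to a scale separation between the increments that contain a jump and those that do not. Recall that the removal rule in \textit{Step 2} deletes, at each stage, the currently largest increment in absolute value, so that $\hat{\mcj}_n^{k_n}$ is precisely the set of indices carrying the $k_n$ largest values among $|\D_1 X|,\dots,|\D_n X|$. Consequently, on $B_n\cap\{1\le N_{T_n}\le k_n\}$ every jump falls in a distinct subinterval, so there are at most $k_n$ jump-bearing increments, and $C_{k_n,n}$ holds as soon as each jump-bearing increment strictly dominates every jump-free increment in absolute value. Writing $c_*:=\inf_x|c(x)|>0$ (Assumption \ref{Ascoef}-(1)), I would fix a large constant $M>0$, set the threshold $v_n:=M\sqrt{n}h_n$, and prove that with probability tending to one all jump-free increments lie below $v_n$ while all jump-bearing increments exceed $2v_n$.

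For the jump-free side, on any interval without a jump one has $\D_j X=\D_j\Xc$, and I would bound the maximal continuous increment by combining the conditional moment bound \eqref{ece}, $E^{j-1}[|\D_j\Xc|^q]\lesssim h_n^{q/2}R_{j-1}$, with Markov's inequality and a union bound over $j=1,\dots,n$. Using Assumption \ref{Moments}-(2) to dominate $E[R_{j-1}]$ uniformly, this yields $P(\max_{1\le j\le n}|\D_j\Xc|>v_n)\lesssim n\,(nh_n)^{-q/2}$, which tends to $0$ once $q>2/(1-\kappa')$ (permissible since all moments are finite). For the jump-bearing side, I would use that on $B_n$ each such interval contains exactly one jump, whence $|\D_j X|\ge c_*|\xi_i|-|\D_j\Xc|$. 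Restricting to the virtual bound $N_{T_n}\le(\lam+1)T_n$ established just above, a union bound over $i\le(\lam+1)T_n$ together with \eqref{hm:conseq-3} gives $P\big(\min_{i\le(\lam+1)T_n}|\xi_i|\le 3v_n/c_*\big)\le(\lam+1)T_n\,P\big(|\xi_1|\le (3M/c_*)\sqrt{n}h_n\big)\to0$.

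Intersecting these two high-probability events, on $B_n\cap\{1\le N_{T_n}\le k_n\}$ every jump-bearing increment then exceeds $3v_n-v_n=2v_n$, while every jump-free increment is at most $v_n$; hence the $N_{T_n}\le k_n$ jump-bearing increments occupy the top $N_{T_n}$ ranks and are all deleted within the first $k_n$ removals, i.e. $C_{k_n,n}$ holds. This shows that $C_{k_n,n}^{\,c}\cap B_n\cap\{1\le N_{T_n}\le k_n\}$ is contained in the union of the two exceptional events, whose probabilities vanish, giving the claim. The main obstacle is precisely this separation at the common scale $v_n=M\sqrt{n}h_n$: one must simultaneously rule out anomalously small jumps and anomalously large continuous increments. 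The first is exactly where Assumption \ref{Asjsize}-(2) enters through \eqref{hm:conseq-3}, while the second relies on the gap $\sqrt{nh_n}\to\infty$ afforded by Assumption \ref{Sampling}, which is what lets $v_n$ dominate the $\sqrt{h_n\log n}$ order of the continuous extremes yet stay below the typical jump size.
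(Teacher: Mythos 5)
Your proof is correct and follows essentially the same route as the paper: both reduce the event to a scale separation at the level $\sqrt{n}h_n$ between the smallest jump size and the largest continuous increment, and both conclude via $T_nP\left(|\xi_1|\lesssim\sqrt{n}h_n\right)\to0$, i.e.\ \eqref{hm:conseq-3}. The only immaterial difference is technical: you control $\max_{1\le j\le n}|\D_j\Xc|$ by a high-moment Markov/union bound, whereas the paper splits the continuous fluctuation into the leading Gaussian term $\sqrt{h_n}\max_j|\eta_j|$ (handled by extreme value theory, of order $\sqrt{h_n\log n}$) plus remainder sums bounded in $L^1$ by $nh_n^2$.
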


\begin{proof}
Hereafter we use the following notations:
\begin{align*}
\mcd_n &= \{j\le n:\, ^\exists i, \text{ s.t. } \tau_i \in[t_{j-1},t_j)\}, \nn\\
\mcc_n &= \{1,\dots,n\}\setminus \mcd_n.
\end{align*}
Write
\begin{equation}
\eta_j=\frac{\D_j w}{\sqrt{h_n}}, \qquad j\le n.
\nonumber
\end{equation}
Recalling that the set $\hat{\mcj}_n^{k_{n}}$ of removed indices is constructed through picking up the first $k_n$-largest increments in magnitude, we have
\begin{align}
& P\Big( C_{k_n,n}^{\,c}\cap \left\{1\leq N_{T_n}\leq k_n \right\}\cap B_n \Big)
\nn\\
&\leq P(\{^\exists j'\in\mcd_n, j''\in\mcc_n \ \ \mbox{s.t.} \ \ |\D_{j'} X|<|\D_{j''} X|\}\cap\{1\leq N_{T_n}\leq k_n\}\cap B_n)
\nn\\
&\leq P\bigg(\bigg\{ {}^\exists j'\in\mcd_n, j''\in\mcc_n \ \ \mbox{s.t.} \ \ \inf_x|c(x)|\min_{1\leq j\leq N_{T_n}}|\xi_j|
\nn\\
&{}\qquad<\bigg|\int_{t_{j'-1}}^{t_{j'}}b_sds+\int_{t_{j'-1}}^{t_{j'}} a_sdw_s\bigg|+\bigg|\int_{t_{{j''}-1}}^{t_{j''}}b_sds+\int_{t_{{j''}-1}}^{t_{j''}} a_sdw_s\bigg|\bigg\}
\cap\{1\leq N_{T_n}\leq k_n\}\cap B_n\Bigg)
\nn\\
&\leq P\bigg( \bigg\{\inf_x|c(x)|\min_{1\leq j\leq N_{T_n}}|\xi_j| < 2\sqrt{h_n}\sup_x |a(x,\al_0)| \max_{1\leq j\leq n}|\eta_j|
\nn\\
&{}\qquad +2\max_{1\leq j\leq n}\bigg(\bigg|\intj b_sds\bigg|+\bigg|\intj(a_s-a_{j-1})dw_s\bigg|\bigg)\bigg\} \cap \{1\leq N_{T_n}\leq k_n\}\cap B_n\Bigg)
\nn\\
&\leq  P\Bigg(\left\{ \inf_x|c(x)|^2\min_{1\leq j\leq N_{T_n}}|\xi_j|^2<r_{1,n}+r_{2,n}\right\} \cap\{1\leq N_{T_n}\leq k_n\}\cap B_n\Bigg),
\label{A3-1}
\end{align}
where
\begin{align*}
&r_{1,n}:=8h_n\sup_x a^2(x,\al_0) \max_{1\leq j\leq n}|\eta_j|^2, \\
&r_{2,n}:=8\sumj\left\{\left(\intj b_sds\right)^2+\left(\intj (a_s-a_{j-1})dw_s\right)^2\right\}.
\end{align*}
From extreme value theory (cf. \cite[Table 3.4.4]{EmbKluMik97}), we have
\begin{equation}
\nn
\max_{1\leq j\leq n } |\eta_i|^2-\left(\log n- \frac{1}{2}\log\log n-\log\Gam\left(\frac{1}{2}\right)\right)=O_p(1).
\end{equation}
This together with Assumption \ref{Ascoef} and \eqref{hm:add-4} leads to
\begin{equation*}
r_{1,n}=O_p(h_n\log n) =O_p(nh_n^2).
\end{equation*}
Jensen's and Burkholder's inequalities together with \cite[Lemma 4.5]{Mas13-1} gives $E[r_{2,n}]\lesssim nh_n^2$,
so that
\begin{equation*}
r_{2,n}=O_p(nh_n^2).
\end{equation*}
Hence, for any $\ep\in(0,1)$, we can pick sufficiently large $N$ and $K$ such that for all $n\geq N$,
\begin{equation*}
P\left( r_{1,n} + r_{2,n} > K nh_n^2\right) <\ep.
\end{equation*}
Building on these estimates, $E[N_{T_n}]=\lam T_n$, and the independence between $N$ and $(\xi_j)$, we see that the upper bound in \eqref{A3-1} can be further bounded by
\begin{align}
& P\Bigg(\left\{\min_{1\leq j\leq N_{T_n}}|\xi_j|^2<\frac{K}{\inf_x|c(x)|^2}nh_n^2\right\}\cap\{1\leq N_{T_n}\leq k_n\}\cap B_n\Bigg) + \ep \nn\\
&\le \sum_{i=1}^{k_{n}}P\Bigg(\left\{\min_{1\leq j\leq i }|\xi_j|^2<\frac{K}{\inf_x|c(x)|^2}nh_n^2\right\}\cap\{N_{T_n}=i\}\Bigg) + \ep
\nn\\
&\le \sum_{i=1}^{k_{n}} i P\left(|\xi_1|^2<\frac{K}{\inf_x|c(x)|^2}nh_n^2\right) P\left( N_{T_n}=i\right) + \ep
\nn\\
&\lesssim T_n P\left(|\xi_1|<\frac{\sqrt{K}}{\inf_x|c(x)|}\sqrt{n}h_n\right) + \ep.
\label{A3-2}
\end{align}
Since the choice of $\ep$ is arbitrary, \eqref{hm:conseq-3} implies the desired result.
\end{proof}

\medskip


Let us introduce the event
\begin{equation}
G_{k_{n},n} := \left\{1\leq N_{T_n}\leq k_n\right\}\cap B_n\cap C_{k_n,n}.
\nonumber
\end{equation}
Thanks to Lemmas \ref{ojump} and \ref{cpick}, it follows that
\begin{align*}
&P\left(\left\{|\sqrt{n}(\tilde{\al}_n^{k_n}-\al_0)|>M\right\}\cap\left\{1\leq N_{T_n}\leq k_n\right\}\right)\\
&\leq P\left(\left\{|\sqrt{n}(\tilde{\al}_n^{k_n}-\al_0)|>M\right\}\cap 
G_{k_{n},n}
\right)+P(C_{k_n,n}^{\,c}\cap \left\{1\leq N_{T_n}\leq k_n \right\}\cap B_n)+P(B_n^c)\\
&=P\left(\left\{|\sqrt{n}(\tilde{\al}_n^{k_n}-\al_0)|>M\right\}\cap 
G_{k_{n},n}
\right)+o(1).
\end{align*}
Hence, in order to prove \eqref{yu:C1}, it suffices to show that for any $\ep>0$ there correspond sufficiently large $M>0$ and $N\in\mbbn$ for which
\begin{equation}\label{clse}
\sup_{n\ge N}P\left(\left\{|\sqrt{n}(\tilde{\al}_n^{k_n}-\al_0)|>M\right\}\cap 
G_{k_{n},n}
\right)<\ep.
\end{equation}
Since $\D_j X=\D_j \Xc$ for each $j\notin \hat{\mcj}_n^{k_n}$ on $G_{k_{n},n}$, we have
\begin{equation}
|\tilde{\al}_n^{k_n}-\al_0|\mbbi_{G_{k_{n},n}} \leq(|\kappa_{1,n}|+|\kappa_{2,n}|+|\kappa_{3,n}|)\mbbi_{G_{k_{n},n}},
\label{thm.4.7-p1}
\end{equation}
where
\begin{align*}
&\kappa_{1,n}:=\frac{1}{h_n}\left\{\left(\sum_{j\notin\hat{\mcj}^{k_n}_n}\mbba_{j-1}\mbba_{j-1}^\top\right)^{-1}-\left(\sumj\mbba_{j-1}\mbba_{j-1}^\top\right)^{-1}\right\}\sum_{j\notin\hat{\mcj}_n^{k_n}}\mbba_{j-1} (\D_j \Xc)^2,\\
&\kappa_{2,n}:=-\frac{1}{h_n}\left(\sumj\mbba_{j-1}\mbba_{j-1}^\top\right)^{-1}\sum_{j\in\hat{\mcj}_n^{k_n}}\mbba_{j-1} (\D_j \Xc)^2,\\
&\kappa_{3,n}:=\frac{1}{h_n}\left(\sumj\mbba_{j-1}\mbba_{j-1}^\top\right)^{-1}\left\{\sumj\mbba_{j-1} (\D_j \Xc)^2-h_n\left(\sumj\mbba_{j-1}\mbba_{j-1}^\top\right)\al_0\right\}.
\end{align*}
Below we look at these three terms separately.

\medskip

1. Evaluation of $\kappa_{1,n}$: From the ergodic theorem, we have
\begin{equation*}
\frac{1}{n}\sumj\mbba_{j-1}\mbba_{j-1}^\top\cip \int \mbba(x)(\mbba(x))^\top\pi_0(dx)>0,
\end{equation*}
so that $(\frac{1}{n}\sumj\mbba_{j-1}\mbba_{j-1}^\top)^{-1}=O_p(1)$ as a random sequence of matrices.
Since $\mbba(x)$ is bounded, we can also obtain
\begin{align*}
&\frac{1}{n}\sum_{j\notin\hat{\mcj}^{k_n}_n} \mbba_{j-1}\mbba_{j-1}^\top =\frac{1}{n}\sumj\mbba_{j-1}\mbba_{j-1}^\top-\frac{1}{n}\sum_{j\in\hat{\mcj}^{k_n}_n} \mbba_{j-1}\mbba_{j-1}^\top=\frac{1}{n}\sumj\mbba_{j-1}\mbba_{j-1}^\top+O_p\left(\frac{k_n}{n}\right),\\
&\left|\frac{1}{T_n}\sum_{j\notin\hat{\mcj}_n^{k_n}}\mbba_{j-1} (\D_j \Xc)^2\right|\lesssim \frac{1}{T_n}\sumj(\D_j \Xc)^2=O_p(1),
\end{align*}
from \eqref{ece}.
Hence it follows that
\begin{align}
&|\sqrt{n}\kappa_{1,n}|\mbbi_{G_{k_{n},n}} \nn\\
&\lesssim 
\left| \left( \frac{1}{n}\sumj\mbba_{j-1}\mbba_{j-1}^\top\right)^{-1} \right| \cdot 
\left|\sqrt{n}\left\{\left(\frac{1}{n}\sumj\mbba_{j-1}\mbba_{j-1}^\top\right)\left(\frac{1}{n}\sum_{j\notin\hat{\mcj}^{k_n}_n}\mbba_{j-1}\mbba_{j-1}^\top\right)^{-1}-I_{p_\al}\right\}\right|
\nn\\
&{}\qquad\times
\left(\frac{1}{T_n}\sumj (\D_j \Xc)^2\right) \nn\\
&=O_p\left(\frac{k_n}{\sqrt{n}}\right)=o_p(1).
\label{thm.4.7-p2}
\end{align}
\medskip

2. Evaluation of $\kappa_{2,n}$: Recall that $\eta_j:=\frac{\D_j w}{\sqrt{h_n}}$.
Under Assumption \ref{Ascoef}, we can derive from the estimates of $r_{1,n}$ and $r_{2,n}$ in the proof of Lemma \ref{cpick} that,
on $G_{k_{n},n}$,
\begin{align}
&\left|\frac{1}{T_n}\sum_{j\in\hat{\mcj}_n^{k_n}}\mbba_{j-1} (\D_j \Xc)^2\right|
\nonumber\\
&\lesssim \frac{1}{T_n} \left\{ \sumj \left[\left(\intj (a_s-a_{j-1})dw_s\right)^2+\left(\intj b_sds\right)^2\right]+k_nh_n\max_{1\leq j\leq n}|\eta_j|^2\right\}
\nn\\
& =O_p\left(\frac{1}{T_n}(nh_n^2 \vee k_n h_n \log n)\right)=O_p\left( h_n \vee \frac{k_n \log n}{n}\right).
\label{smoj}
\end{align}
Thus we get 
\begin{align}
|\sqrt{n}\kappa_{2,n}|\mbbi_{G_{k_{n},n}}
=O_p\left( \sqrt{n h_n^2} \vee \frac{k_n \log n}{\sqrt{n}}\right) = o_p(1).
\label{thm.4.7-p3}
\end{align}

\medskip
3. Evaluation of $\kappa_{3,n}$: 
From \eqref{sce}, \eqref{fce}, \eqref{ece}, and the martingale central limit theorem (see the proof of Lemma \ref{ana}), it follows that
\begin{align}
\sqrt{n}\kappa_{3,n} &= 
\left(\frac{1}{n}\sumj\mbba_{j-1}\mbba_{j-1}^\top\right)^{-1}
\frac{1}{\sqrt{n}}\sumj\mbba_{j-1} \left\{\left(\frac{\D_j \Xc}{\sqrt{h_n}}\right)^2 - \mbba_{j-1}^\top\al_0\right\} = O_p(1).
\label{thm.4.7-p4}
\end{align}

Substituting \eqref{thm.4.7-p2}, \eqref{thm.4.7-p3} and \eqref{thm.4.7-p4} into \eqref{thm.4.7-p1} now yields that
\begin{equation}
\left| \sqrt{n}\left(\tilde{\al}_n^{k_n}-\al_0\right)\right|\mbbi_{G_{k_{n},n}}=O_p(1),
\nonumber
\end{equation}
followed by \eqref{clse}.

\subsubsection{Proof of \eqref{yu:C2}}
Let
\begin{equation*}
D_{k_n,n}:=
\left\{\mcc_n\cap\hat{\mcj}_n^{k_n}=\emptyset\right\}.
\end{equation*}
Recall that the probability of the event $\{N_{T_n}\geq (\lam+1)T_n\}$ is asymptotically negligible and that without loss of generality, we can assume $k_n\leq (\lam+1)T_n-1$.
Then we get the following lemma.
\begin{Lem}
\label{hm:lem-1}
\begin{equation*}
P\left( D_{k_n,n}^c \cap\{k_n+1\leq N_{T_n}\leq (\lam+1)T_n\}\cap B_n\right)\to0.
\end{equation*}
\end{Lem}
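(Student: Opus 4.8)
The plan is to mirror the argument for Lemma \ref{cpick}, since the event $D_{k_n,n}^c$ forces exactly the same rare coincidence used there: a continuous (no-jump) increment must dominate a jump-carrying increment. First I would exploit the combinatorics available on $B_n$. On $B_n$ no interval carries more than one jump, so $\sharp\mcd_n=N_{T_n}$; combined with $N_{T_n}\ge k_n+1$ and $\sharp\hat{\mcj}_n^{k_n}=k_n$, the pigeonhole principle guarantees at least one jump interval $j'\in\mcd_n\setminus\hat{\mcj}_n^{k_n}$ that is \emph{not} removed. Simultaneously, $D_{k_n,n}^c=\{\mcc_n\cap\hat{\mcj}_n^{k_n}\neq\emptyset\}$ furnishes a no-jump interval $j''\in\mcc_n\cap\hat{\mcj}_n^{k_n}$ that \emph{is} removed. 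Because $\hat{\mcj}_n^{k_n}$ collects the $k_n$ largest increments in magnitude, the removed index dominates the retained one: $|\D_{j''}X|\ge|\D_{j'}X|$.

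Next I would convert this ordering into the familiar small-jump inequality. Since $j''\in\mcc_n$ we have $\D_{j''}X=\D_{j''}\Xc$, while on $B_n$ the single jump $\xi_i$ in $[t_{j'-1},t_{j'})$ gives $|\D_{j'}X|\ge\inf_x|c(x)|\,|\xi_i|-|\D_{j'}\Xc|$. Chaining these estimates produces
\begin{equation*}
\inf_x|c(x)|\min_{1\le i\le N_{T_n}}|\xi_i|\le|\D_{j''}\Xc|+|\D_{j'}\Xc|\le 2\max_{1\le j\le n}|\D_j\Xc|,
\end{equation*}
which is precisely the quantity already bounded inside the proof of Lemma \ref{cpick}. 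Splitting $\D_j\Xc$ into its drift and diffusion parts and writing $\eta_j=\D_j w/\sqrt{h_n}$, the right-hand side is controlled by $r_{1,n}+r_{2,n}$, for which the extreme-value estimate for $\max_j|\eta_j|^2$ and the moment bound $E[r_{2,n}]\lesssim nh_n^2$ yield $r_{1,n}+r_{2,n}=O_p(nh_n^2)$.

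Finally I would conclude exactly as in \eqref{A3-1}--\eqref{A3-2}: for any $\ep\in(0,1)$ pick $K$ and $N$ so that $P(r_{1,n}+r_{2,n}>Knh_n^2)<\ep$ for all $n\ge N$, and then, using the independence of $N$ and $(\xi_i)$ together with $E[N_{T_n}]=\lam T_n$, bound the target probability by
\begin{equation*}
\sum_{i=1}^{\lf(\lam+1)T_n\rf} i\,P\Big(|\xi_1|^2<\frac{K}{\inf_x|c(x)|^2}nh_n^2\Big)P(N_{T_n}=i)+\ep\lesssim T_nP\big(|\xi_1|<C\sqrt{n}h_n\big)+\ep,
\end{equation*}
after which \eqref{hm:conseq-3} drives the first term to $0$ and $\ep$ is arbitrary. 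The only genuinely new ingredient relative to Lemma \ref{cpick} is the pigeonhole step securing a retained jump interval $j'$, which is what lets the ``no-jump beats jump'' comparison be set up in the first place. The main obstacle is thus bookkeeping rather than analysis: one must verify that the magnitude ordering $|\D_{j''}X|\ge|\D_{j'}X|$ is legitimate, which rests solely on $\hat{\mcj}_n^{k_n}$ being the set of largest increments and on $j''$ being removed while $j'$ is retained.
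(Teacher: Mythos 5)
Your proposal is correct and follows essentially the same route as the paper, which proves this lemma by reducing, exactly as in Lemma \ref{cpick}, to the event that some $j'\in\mcd_n$ and $j''\in\mcc_n$ satisfy $|\D_{j'}X|<|\D_{j''}X|$ and then reusing the bound $T_nP(|\xi_1|<C\sqrt{n}h_n)\to0$. The only difference is that you spell out the pigeonhole step (using $N_{T_n}\ge k_n+1$ on $B_n$ to secure a retained jump interval) that the paper leaves implicit in its first inequality.
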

\begin{proof}
The lemma can be shown in a quite similar way to Lemma \ref{proof_lem2}.
Letting $\ep$, $N$, and $K$ be the same as in the proof of Lemma \ref{proof_lem2}, as in \eqref{A3-1} and \eqref{A3-2}, we have for any $n\geq N$,
\begin{align*}
&P\left( D_{k_n,n}^c \cap\{k_n+1\leq N_{T_n}\leq (\lam+1)T_n\}\cap B_n\right)\\
&\le P\left(\left\{ {}^\exists j'\in\mcd_n,\, j''\in\mcc_n \ \ \mbox{s.t.} \ \ |\D_{j'} X|<|\D_{j''} X| \right\}\cap \{k_n+1\leq N_{T_n}\leq (\lam+1)T_n\}\cap B_n\right)
\nn\\
&\leq P\left(\left\{\min_{1\leq j\leq N_{T_n}}|\xi_j|^2<\frac{K}{\inf_x|c(x)|^2}nh_n^2\right\}\cap\{k_n+1\leq N_{T_n}\leq (\lam+1)T_n\}\cap B_n\right)\\
&\lesssim T_n P\left(|\xi_1|<\frac{\sqrt{K}}{\inf_x|c(x)|}\sqrt{n}h_n\right)+\ep = o(1) +\ep.
\end{align*}
Hence the proof is complete.
\end{proof}

Let
\begin{equation}
H_{k_{n},n}:=\left\{k_n+1\leq N_{T_n}\leq (\lam+1)T_n\right\}\cap B_n\cap D_{k_n,n}.
\nonumber
\end{equation}
Thanks to Lemma \ref{hm:lem-1}, \eqref{yu:C2} is led by
\begin{equation}
P\left( \left\{ \mathrm{JB}_n^{k_n}\leq\chi^2_q(2)\right\}\cap H_{k_n,n}\right) = o(1).
\label{YU:prob2}
\end{equation}

In view of the definition \eqref{yu:msnr}, \eqref{YU:prob2} follows on showing that for any $M>0$,
\begin{equation}
\label{YU:prob2+1}
P\left(\left\{\frac{1}{\sqrt{n}}\sum_{j\notin\hat{\mcj}_{n}^{k_n}}\left(\left(\hat{N}_j^k\right)^4-3\right)<M\right\}\cap H_{k_n,n}\right)=o(1);
\end{equation}
recall the notation $\hat{N}_j^k =(\hat{S}_n^k)^{-1/2}(\ep_j(\aes^k)-\bar{\hat{\ep}}_n^k)$;
for this purpose, we need to clarify asymptotic behaviors of $\bar{\hat{\ep}}_{n}^{k_n} \mbbi_{H_{k_{n},n}}$ and $\hat{S}_n^{k_n}\mbbi_{H_{k_{n},n}}$.

\medskip

Define the diverging real sequence $a_n$ by \eqref{hm:an_def}:
\begin{equation}
a_n=T_n^\eta \uparrow\infty
\nonumber
\end{equation}
with $\eta>0$ being small enough to ensure \eqref{hm:conseq-1} to \eqref{hm:conseq-2}.
Making $\eta>0$ smaller if necessary so that $\eta <1/4$, we may and do further suppose that
\begin{equation}
a_n\sqrt{h_n} \vee \frac{a_n^4 h_n}{nh_n^2} \to 0.
\nonumber
\end{equation}
First we will prove
\begin{equation}
\bar{\hat{\ep}}_{n}^{k_n} \mbbi_{H_{k_{n},n}} =O_p\left(a_n \sqrt{h_n} \right) = o_p(1).
\label{hm:4.8.p++1}
\end{equation}
Decompose $\bar{\hat{\ep}}_{n}^{k_n}$ as
\begin{equation*}
\bar{\hat{\ep}}_{n}^{k_n}=
\frac{1}{n-k_n}\sumj \ep_j(\aes^{k_n}) - \frac{1}{n-k_n}\sum_{j\in\hat{\mcj}_n^{k_n}}\ep_j(\aes^{k_n}).
\end{equation*}
Below we will look at the terms in the right-hand side separately.

Observe that
\begin{align}
\frac{1}{\sqrt{h_n}} \left|\frac{1}{n-k_n}\sumj \ep_j(\aes^{k_n})\right|
&\lesssim
\left|\frac{1}{n} \sumj\frac{1}{a_{j-1} (\aes^{k_n})} \frac{1}{h_n}\intj (b_s+c_s\lam E[\xi_1])ds \right| \nn\\
&{}\qquad 
+\left| \sumj \frac{1}{T_n} \frac{1}{a_{j-1} (\aes^{k_n})} \left(\intj a_sdw_s \right)\right| \nn\\
&{}\qquad
+\left| \sumj \frac{1}{T_n} \frac{1}{a_{j-1} (\aes^{k_n})} \left(\intj c_{s-}d\tilde{J}_s \right) \right|,
\label{hm:jbh-1}
\end{align}
where $\tilde{J}_t:=J_t-\lam E[\xi_1]t$.
Obviously the first term in \eqref{hm:jbh-1} is $O_p(1)$, and we are going to show that the remaining two terms are $o_p(1)$.
To achieve this, under the present assumptions it suffices to prove the following claim:
let $\pi(x,\al)$ be a bounded real-valued function on $\mbbr\times\Theta_\al$ such that $|\pi(x,\al)-\pi(x,\al')| \lesssim |\al-\al'|$ for each $x\in\mbbr$ and $\al,\al'\in\bar{\Theta}_\al$,
and consider the random $\mcc^1(\bar{\Theta}_\al)$-functions
\begin{align*}
&F_{1,n}(\al):= \sumj \frac{1}{T_n} \pi_{j-1}(\al) \intj a_{s}dw_s,\\
&F_{2,n}(\al):= \sumj \frac{1}{T_n} \pi_{j-1}(\al) \intj c_{s-}d\tilde{J}_s.
\nonumber
\end{align*}
Then we claim
\begin{equation}
\sup_{\al\in\bar{\Theta}_\al}|F_{1,n}(\al)| = o_p(1), \quad \sup_{\al\in\bar{\Theta}_\al}|F_{2,n}(\al)|=o_p(1)
\label{hm:4.8.p++2.5}
\end{equation}
To show \eqref{hm:4.8.p++2.5}, we note that $F_{1,n}(\al)\cip 0$ and $F_{2,n}(\al)\cip0$ for each $\al$, which follows on applying \cite[Lemma 9]{GenJac93}.
Concerning $F_{1,n}$, by making use of Burkholder's inequality and Jensen's inequality, it is easy to deduce that for each integer $q>(p_\al\vee 2)$, $E\left[|F_{1,n}(\al)|^q\right]\lesssim 1$ and $E\left[ |F_{1,n}(\al)-F_{1,n}(\al')|^q\right]\lesssim |\al-\al'|^q$.
As for $F_{2,n}$, proceeding as in \cite[Eq.(4.14)]{Mas13-1} we can verify that for each integer $q>(p_\al \vee 2)$:
letting
\begin{align*}
\chi_j(t):=
\begin{cases}
1 & t\in(t_{j-1},t_j],\\
0 & \text{otherwise},
\end{cases}
\end{align*}
we have
\begin{align*}
E\left[|F_{2,n}(\al)|^q\right]&\lesssim T_n^{-q} E\left\{\left|
\int_0^{T_n} \left(\sumj \chi_j(s) \pi_{j-1}(\al)c_{s-} \right) d\tilde{J}_s \right|^q \right\} \nn\\
&\lesssim T_n^{-q/2} \frac{1}{T_n}\int_0^{T_n} \sumj \chi_j(s) E\left[ |c_{s-}|^q \right] ds \nn\\
&\lesssim1,
\end{align*}
and
\begin{align}
E\left[ |F_{2,n}(\al)-F_{2,n}(\al')|^q\right]
&\lesssim T_n^{-q} E\left\{\left|
\int_0^{T_n} \left(\sumj \chi_j(s) (\pi_{j-1}(\al)-\pi_{j-1}(\al'))c_{s-} \right) d\tilde{J}_s \right|^q \right\} \nn\\
&\lesssim T_n^{-q/2} \frac{1}{T_n}\int_0^{T_n} \sumj \chi_j(s) E\left\{ |\pi_{j-1}(\al)-\pi_{j-1}(\al')|^q |c_{s-}|^q \right\} ds \nn\\
&\lesssim T_n^{-q/2} |\al-\al'|^q \lesssim |\al-\al'|^q.
\nonumber
\end{align}
Hence the Kolmogorov criterion (cf. \cite[Theorem 1.4.7]{Kun97}) concludes the tightness of $\{F_{1,n}(\cdot)\}_n$ and $\{F_{2,n}(\cdot)\}_n$ in the space $\mcc(\bar{\Theta}_\al)$ (equipped with the uniform metric), from which \eqref{hm:4.8.p++2.5} follows.
We thus conclude
\begin{equation}
\frac{1}{n-k_n}\sumj \ep_j(\aes^{k_n}) = O_p\left(\sqrt{h_n}\right).
\label{hm:4.8.p++2}
\end{equation}


Next, it follows from Assumptions \ref{Ascoef} and \ref{Sampling} that
\begin{align}\label{esresi}
\max_{1\leq j\leq n}\ep_j^2(\aes^{k_n})\mbbi_{H_{k_{n},n}}
&\lesssim \frac{1}{h_n}\max_{1\leq j\leq n}(\D_j X)^2\mbbi_{H_{k_{n},n}}\nn\\
&\lesssim \frac{1}{h_n}\sumj \left\{\left(\intj b_sds\right)^2+\left(\intj (a_s-a_{j-1})dw_s\right)^2\right\} \nn\\
&{}\qquad+\max_{1\leq j\leq n} \eta_j^2+\frac{1}{h_n}\max_{1\leq j\leq \lf(\lam+1)T_n\rf} \xi_j^2\nn\\
& \lesssim O_p(T_n) + O_p(\log n) + O_p\left(\frac{a_n^2}{h_n}\right) \nn\\
& =O_p\left(\frac{a_n^2}{h_n}\left(\frac{T_n h_n}{a_n^2} +1\right)\right)=O_p\left(\frac{a_n^2}{h_n}\right).
\end{align}
This gives
\begin{align}
\left|\frac{1}{n-k_n}\sum_{j\in\hat{\mcj}_n^{k_n}} \ep_j(\aes^{k_n})\right|\mbbi_{H_{k_{n},n}}
&\lesssim \frac{k_n}{n}\sqrt{\max_{1\leq j\leq n}\ep_j^2(\aes^{k_n})}\,\mbbi_{H_{k_{n},n}} \nn\\
&= O_p\left(\frac{k_n a_n}{n\sqrt{h_n}}\right) = O_p\left(a_n\sqrt{h_n} \right),
\label{hm:4.8.p++3}
\end{align}
and \eqref{hm:4.8.p++1} follows from \eqref{hm:4.8.p++2} and \eqref{hm:4.8.p++3}.

\medskip

Next we look at $\hat{S}_n^{k_n}\mbbi_{H_{k_{n},n}}$. Note that \eqref{hm:4.8.p++1} under Assumption \ref{Asjsize} entails
\begin{align}
\hat{S}_n^{k_n}\mbbi_{H_{k_{n},n}}
&=\frac{1}{n-k_n}\sum_{j\notin\hat{\mcj}_n^{k_n}} \ep^2_j(\aes^{k_n})\mbbi_{H_{k_{n},n}} + o_p(1).
\label{hm:4.8.p++4}
\end{align}
From Assumption \ref{Ascoef}, the following relation holds:
\begin{equation}\label{rela}
\frac{1}{T_n}\sum_{j\notin\hat{\mcj}_n^{k_n}}(\D_jX)^2\lesssim\frac{1}{n-k_n}\sum_{j\notin\hat{\mcj}_n^{k_n}} \ep^2_j(\aes^{k_n})\lesssim \frac{1}{T_n}\sumj (\D_jX)^2. 
\end{equation}
From Cauchy-Schwarz inequality, Burkholder's inequality and \cite[Lemma 4.5]{Mas13-1}, we derive
\begin{align*}
E\left[\left(\D_j X\right)^2\right]
&=E\Bigg[\bigg(\intj(a_s-a_{j-1})dw_s+\intj (b_s+\lam E(\xi_1) c_s)ds \nn\\
&{}\qquad +\intj(c_{s-}-c_{j-1})d\tilde{J}_s+a_{j-1}\D_jw+c_{j-1}\D_j\tilde{J}\bigg)^2\Bigg]\\
&=E\left[\left(a_{j-1}\D_j w+c_{j-1}\D_j \tilde{J}\right)^2\right]+O\left(h_n^{\frac{3}{2}}\right)\\
&\lesssim h_n.
\end{align*}
Hence the rightmost side in \eqref{rela} is $O_p(1)$.
In a similar manner through Cauchy-Schwarz and Burkholder's inequalities, we have
\begin{align*}
\frac{1}{T_n}\sum_{j\notin\hat{\mcj}_n^{k_n}}(\D_jX)^2
&=\frac{1}{T_n}\sum_{j\notin\hat{\mcj}_n^{k_n}}\left(a_{j-1}\D_j w+c_{j-1}\D_j \tilde{J}\right)^2+O_p\left(\sqrt{h_n}\right)\\
&=\frac{1}{T_n}\sum_{j\notin\hat{\mcj}_n^{k_n}} \left(c_{j-1}\D_j \tilde{J}\right)^2
+\frac{1}{T_n}\sumj\left\{\left(a_{j-1}\D_j w\right)^2+2a_{j-1}c_{j-1}\D_j w\D_j \tilde{J}\right\}\\
&{}\qquad-\frac{1}{T_n}\sum_{j\in\hat{\mcj}_n^{k_n}} \left\{\left(a_{j-1}\D_j w\right)^2+2a_{j-1}c_{j-1}\D_j w\D_j \tilde{J}\right\} + o_p(1) \nn\\
&\ge \frac{1}{T_n}\sumj\left\{\left(a_{j-1}\D_j w\right)^2+2a_{j-1}c_{j-1}\D_j w\D_j \tilde{J}\right\}\\
&{}\qquad-\frac{1}{T_n}\sum_{j\in\hat{\mcj}_n^{k_n}} \left\{\left(a_{j-1}\D_j w\right)^2+2a_{j-1}c_{j-1}\D_j w\D_j \tilde{J}\right\} + o_p(1) \nn\\
&=: L_{n} - \hat{L}^{k_n}_n + o_p(1).
\end{align*}
The independence between $w$ and $J$, \cite[Lemma 9]{GenJac93}, and the ergodic theorem yield that
\begin{equation*}
L_n \cip \int a^2(x,\al_0)\pi_0(dx)>0.
\end{equation*}
In a similar manner to \eqref{esresi}, Assumption \ref{Asjsize} implies that
\begin{align*}
| \hat{L}^{k_n}_n | \mbbi_{H_{k_{n},n}} \le  O_p\left(\frac{k_n \log n}{n}\right) + O_p\left(a_n\sqrt{h_n\log n}\right) = o_p(1).
\end{align*}
Summarizing the last three displays leads to
\begin{equation}
\frac{1}{T_n}\sum_{j\notin\hat{\mcj}_n^{k_n}}(\D_jX)^2 \ge \int a^2(x,\al_0)\pi_0(dx) + o_p(1) - \mbbi_{H_{k_{n},n}^c}| \hat{L}^{k_n}_n |
\nonumber
\end{equation}
Fix an arbitrary $\ep>0$.
By the last display combined with \eqref{hm:4.8.p++1}, \eqref{hm:4.8.p++4} and \eqref{rela},
we can pick a positive constant $K=K(\ep)>1$ and a positive integer $N=N(\ep)$ such that
\begin{equation*}
\sup_{n\ge N}P\left[\left(\left\{\hat{S}_n^{k_n}<\frac{1}{K}\right\}\cup\left\{\hat{S}_n^{k_n}>K\right\} \cup 
\left\{|\bar{\hat{\ep}}_n^{k_n}|>K a_n \sqrt{h_n} \right\} \right) \cap H_{k_{n},n} \right] < \ep.
\end{equation*}
Therefore, to conclude \eqref{YU:prob2+1} we may and do focus on the event
\begin{equation*}
F_{k_n,n,\ep} := \left\{ \frac{1}{K} \le \hat{S}_n^{k_n} \le K\right\} \cap\left\{|\bar{\hat{\ep}}_n^{k_n}|\leq K a_n \sqrt{h_n}  \right\} \cap H_{k_{n},n};
\end{equation*}
we note that on $F_{k_n,n,\ep}$ there remain jumps (not removed), its number of pieces being at least one jump.

From Assumption \ref{Ascoef},
\begin{align}
|h_n^{-1/2}\D_j X|^k &\lesssim |h_n^{-1/2}\D_j \Xc|^k + |h_n^{-1/2}\D_j J|^k, \nn\\
|h_n^{-1/2}\D_j X|^k &\gtrsim |h_n^{-1/2}\D_j J|^k - |h_n^{-1/2}\D_j \Xc|^k
\nonumber
\end{align}
for $k>0$ hold on $B_n$. 
With these together with \eqref{ece}, writing positive constants $C=C(a,c)$ possibly varying from line to line,
we have
\begin{align}
&\frac{1}{\sqrt{n}}\sum_{j\notin\hat{\mcj}_{n}^{k_n}}\left(\left(\hat{N}_j^k\right)^4-3\right)\mbbi_{F_{k_n,n,\ep}} \nn\\
&\gtrsim\sqrt{n}\left\{\frac{1}{n}\sum_{j\notin\hat{\mcj}_{n}^{k_n}}\left(\left|\frac{\D_j X}{\sqrt{h_n}}\right|^4
-C a_n\sqrt{h_n}  \left|\frac{\D_j X}{\sqrt{h_n}}\right|^3\right) + O_p(1)\right\} \mbbi_{F_{k_n,n,\ep}}\nn\\
&\gtrsim \sqrt{n}\left\{
\frac{1}{nh_n^2}\sum_{j\notin\hat{\mcj}_{n}^{k_n}} \left( \left(\D_j J\right)^4
-C|\D_j J|^3 a_n h_n  \right) +O_p(1) \right\} \mbbi_{F_{k_n,n,\ep}}.
\nn\\
&\gtrsim \sqrt{n}\left\{
\frac{1}{nh_n^2}\sum_{j\notin\hat{\mcj}_{n}^{k_n}} \left(
\min_{1\leq j\leq \lf(\lam+1)T_n\rf} |\xi_j|^4 - C h_na_n \max_{1\leq j\leq \lf(\lam+1)T_n\rf}|\xi_j|^3
\right) +O_p(1) \right\} \mbbi_{F_{k_n,n,\ep}}.
\label{hm:addadd-1}
\end{align}
Since $a_n^4 h_n /(nh_n^2) = n^{-\del''}\to 0$ for some $\del''>0$, we have for every $M,M'>0$
\begin{align}
& P\left(
\min_{1\leq j\leq \lf(\lam+1)T_n\rf} |\xi_j|^4 - M h_na_n \max_{1\leq j\leq \lf(\lam+1)T_n\rf}|\xi_j|^3 \ge M' nh_n^2
\right) \nn\\
&= P\left( \min_{1\leq j\leq \lf(\lam+1)T_n\rf} |\xi_j|^4 \gtrsim O_p(a_n^4 h_n) + nh_n^2 \right) \nn\\
&= P\left( \frac{\min_{1\leq j\leq \lf(\lam+1)T_n\rf} |\xi_j|}{(nh_n^2)^{1/4}} \gtrsim 1 \right) + o(1) \nn\\
&= \left\{ 1-P\left(|\xi_1| \lesssim (nh_n^2)^{1/4}\right)\right\}^{\lf(\lam+1)T_n\rf} + o(1).
\nonumber
\end{align}
The last probability tends to $1$ since
\begin{align}
T_n P\left(|\xi_1| \lesssim (nh_n^2)^{1/4}\right) &\lesssim n^{1-\kappa + (1-2\kappa)s/4} \to 0,
\nonumber
\end{align}
by \eqref{Asjsize}.
Recalling that on the event $F_{k_n,n,\ep}$ there is at least one jump over $[0,T_n]$ which is yet to be removed, we can continue \eqref{hm:addadd-1} as follows:
on an event whose probability gets arbitrarily close to $1$ as $n\to\infty$,
\begin{align}
&\gtrsim \sqrt{n}\bigg\{ \frac{1}{nh_n^2}\bigg(\min_{1\leq j\leq \lf(\lam+1)T_n\rf} |\xi_j|^4 
-C h_n a_n \max_{1\leq j\leq \lf(\lam+1)T_n\rf}|\xi_j|^3
\bigg)
+O_p(1)
\bigg\}\mbbi_{F_{k_n,n,\ep}}
\nn\\
&\gtrsim \sqrt{n}\left( M' + O_p(1) \right) \mbbi_{F_{k_n,n,\ep}} \nn\\
&\gtrsim \sqrt{n} \mbbi_{F_{k_n,n,\ep}}.
\nn
\end{align}
This entails \eqref{YU:prob2+1}, hence \eqref{YU:prob2} as well. 
\medskip

\subsection{Proof of Theorem \ref{Ae}}



We will complete the proof of Theorem \ref{Ae} by showing
\begin{align}
& P\left(\left\{\left|\sqrt{n}(\aes^{k_n}-\aes^{k_n, \mathrm{cont}})\right|\vee\left|\sqrt{T_n}(\bes^{k_n}-\bes^{k_n,\mathrm{cont}})\right|>\ep\right\}\cap G_{k_{n},n}
\right) = o(1), \label{YU:prob1}
\end{align}
Indeed, we can deduce from Lemmas \ref{ojump}, \ref{cpick}, and \ref{hm:lem-1} that for any $\ep>0$, the probabillity
\begin{align}
&P\left(\left\{\left|\sqrt{n}(\aes^{k_n}- \aes^{k_n, \mathrm{cont}} )\right|\vee\left|\sqrt{T_n}(\bes^{k_n}-\bes^{k_n,\mathrm{cont}})\right|>\ep\right\}\cap \left\{ \mathrm{JB}_n^{k_n}\leq\chi^2_q(2)\right\}\right)
\nn
\end{align}
can be bounded from above by the sum of the two probabilities given in \eqref{YU:prob1} and \eqref{YU:prob2}, plus an $o(1)$ term.
Recall that for any $j\notin \mcj_n^{k_n}$, $\D_j X=\D_j \Xc$ on $G_{k_{n},n}$.
Making use of Assumption \ref{Ascoef}, \eqref{smoj}, and a similar argument to the proof of Theorem \ref{Consistency}, we get
\begin{align*}
&\left|\sqrt{n}(\aes^{k_n}-\aes^{k_n, \mathrm{cont}})\mbbi_{G_{k_{n},n}}\right|\\
&\leq \left|\left(\frac{1}{n}\sum_{j\notin\hat{\mcj}_n^{k_n}}\frac{\mbba_{j-1}\mbba_{j-1}^\top}{\mbba_{j-1}^\top\tilde{\al}_n^{k_n}}\right)^{-1}\left\{\frac{1}{\sqrt{n}}\sum_{j\in\hat{\mcj}_n^{k_n}} \left(\frac{1}{\mbba_{j-1}^\top\tilde{\al}_n^{k_n}}-\frac{(\D_j \Xc)^2}{h_n(\mbba_{j-1}^\top \tilde{\al}_n^{k_n})^2}\right)\mbba_{j-1}\right\}\right|
\mbbi_{G_{k_{n},n}}\\
&{}\quad+\left|\left(\frac{1}{n}\sumj\frac{\mbba_{j-1}\mbba_{j-1}^\top}{\mbba_{j-1}^\top\tilde{\al}_n^{k_n}}\right)^{-1}\right|\left|\left(\frac{1}{n}\sumj\frac{\mbba_{j-1}\mbba_{j-1}^\top}{\mbba_{j-1}^\top\tilde{\al}_n^{k_n}}\right)\left(\frac{1}{n}\sum_{j\notin\hat{\mcj}_n^{k_n}}\frac{\mbba_{j-1}\mbba_{j-1}^\top}{\mbba_{j-1}^\top\tilde{\al}_n^{k_n}}\right)^{-1}-I_{p_\al}\right|\\
&{}\quad\times\left|\frac{1}{\sqrt{n}}\sumj\left(\frac{1}{\mbba_{j-1}^\top\tilde{\al}_n^{k_n}}-\frac{(\D_j \Xc)^2}{h_n(\mbba_{j-1}^\top \tilde{\al}_n^{k_n})^2}\right)\mbba_{j-1}\right|\mbbi_{G_{k_{n},n}}\\
&=
O_p(1) \cdot \left\{O_p\left(\frac{k_n}{\sqrt{n}}\right) + O_p\left( \sqrt{n h_n^2} \vee \frac{k_n \log n}{\sqrt{n}}\right)\right\}
+ O_p(1) \cdot o_p(1) \cdot O_p(1)
\nn\\
&=o_p(1).
\end{align*}
Next, to deduce $\sqrt{T_n}(\bes^{k_n}-\bes^{k_n,\mathrm{cont}}) \mbbi_{G_{k_{n},n}} = o_p(1)$, it suffices to prove
\begin{align}
& \frac{1}{n}\sumj \frac{\mbbb_{j-1}\mbbb_{j-1}^\top}{\mbba_{j-1}^\top\aes^{k_n}}\mbbi_{G_{k_{n},n}}
=\frac{1}{n}\sum_{j\notin\hat{\mcj}_n^{k_n}} \frac{\mbbb_{j-1}\mbbb_{j-1}^\top}{\mbba_{j-1}^\top\aes^{k_n}}
\mbbi_{G_{k_{n},n}}+o_p(1), \label{YU:exp1} \\
&\left|\frac{1}{\sqrt{T_n}}\sum_{j\in\hat{\mcj}_n^{k_n}} \frac{\D_j \Xc-h_nb_{j-1}}{\mbba_{j-1}^\top\aes^{k_n}}\mbbb_{j-1}\right|\nn \\
&=\left|\frac{1}{\sqrt{T_n}}\sum_{j\in\hat{\mcj}_n^{k_n}} \frac{\intj (b_s-b_{j-1})ds+\intj (a_s-a_{j-1})dw_s+a_{j-1}\D_j w}{\mbba_{j-1}^\top\aes^{k_n}}\mbbb_{j-1}\right|=o_p(1).
\label{YU:exp2} 
\end{align}
Then, as in the proof of Theorem \ref{osan} we see that
\begin{align*}
&\left|\sqrt{T_n}(\bes^{k_n}-\bes^{k_n,\mathrm{cont}})\mbbi_{G_{k_n,n}}\right|\\
&\le \left|\left(\frac{1}{n}\sum_{j\notin\hat{\mcj}_n^{k_n}} \frac{\mbbb_{j-1}\mbbb_{j-1}^\top}{\mbba_{j-1}^\top\aes^{k_n}}\right)^{-1} \frac{1}{\sqrt{T_n}}\sum_{j\in\hat{\mcj}_n^{k_n}} \frac{\D_j \Xc-h_nb_{j-1}}{\mbba_{j-1}^\top\aes^{k_n}}\mbbb_{j-1}\right|
\mbbi_{G_{k_n,n}}\\
&{}\qquad+\left|\left(\frac{1}{n}\sumj \frac{\mbbb_{j-1}\mbbb_{j-1}^\top}{\mbba_{j-1}^\top\aes^{k_n}}\right)^{-1}\right|\left|\left(\frac{1}{n}\sumj \frac{\mbbb_{j-1}\mbbb_{j-1}^\top}{\mbba_{j-1}^\top\aes^{k_n}}\right)\left(\frac{1}{n}\sum_{j\notin\hat{\mcj}_n^{k_n}} \frac{\mbbb_{j-1}\mbbb_{j-1}^\top}{\mbba_{j-1}^\top\aes^{k_n}}\right)^{-1}-I_{p_\beta}\right|\\
&{}\qquad\times\left|\frac{1}{\sqrt{T_n}}\sumj \frac{\D_j \Xc-h_nb_{j-1}}{\mbba_{j-1}^\top\aes^{k_n}}\mbbb_{j-1}\right|
\mbbi_{G_{k_n,n}}\\
&=o_p(1).
\end{align*}

By It\^{o}'s formula,
\begin{align}\label{YU:supesti}
&\left|\frac{1}{n}\sum_{j\in\hat{\mcj}_n^{k_n}} \frac{\mbbb_{j-1}\mbbb_{j-1}^\top}{\mbba_{j-1}^\top\aes^{k_n}}\right|
\mbbi_{G_{k_{n},n}}\nn\\
&\lesssim \frac{k_n}{n}\left(1+\sup_{0\leq t\leq T_n} X_t^2\right)\nn\\
&=\frac{k_n}{n}\sup_{0\leq t\leq T_n} \left\{1+X_0^2+2\int_0^t X_{s-}dX_s+\int_0^t a_s^2ds+\sum_{0<s\leq t} (\D_s X)^2\right\}\nn\\
&\lesssim\frac{k_n}{n}\Bigg\{1+X_0^2+\int_0^{T_n}\left(a_s^2+|X_sb_s|+c_s^2 +\left|X_{s}c_s\lam E[\xi_1]\right|\right)ds\nn\\
&+\sup_{0\leq t\leq T_n}\left|\int_0^tX_sa_sdw_s+\int_0^t\int_\mbbr \left(c_{s-}^2z^2+X_{s-}c_{s-}z\right)\tilde{N}(ds,dz)\right|\Bigg\},
\end{align}
where $\tilde{N}(\cdot,\cdot)$ denotes the compensated Poisson random measure associated with $J$.
Applying Assumption \ref{Moments} and Burkholder's inequality to the last term, we get 
\begin{equation*}
\frac{1}{n}\sum_{j\in\hat{\mcj}_n^{k_n}} \frac{\mbbb_{j-1}\mbbb_{j-1}^\top}{\mbba_{j-1}^\top\aes^{k_n}}\mbbi_{G_{k_{n},n}}
=O_p(h_n k_n)=o_p\left(\frac{\sqrt{nh_n^2}}{\log n} \right)=o_p(1),
\end{equation*}
hence \eqref{YU:exp1}.

Utilizing the Lipschitz continuity of $b$ and \cite[Lemma 4.5]{Mas13-1}, we have
\begin{align*}
E\left[\left|\frac{1}{\sqrt{T_n}}\sum_{j\in\hat{\mcj}_n^{k_n}} \frac{\intj (b_s-b_{j-1})ds}{\mbba_{j-1}^\top\aes^{k_n}}\mbbb_{j-1}\right|\right]&\lesssim \frac{1}{\sqrt{T_n}} \sumj\intj E\left[|(b_s-b_{j-1})\mbbb_{j-1}|\right]ds\\
&=O_p\left(\sqrt{nh_n^2}\right)=o_p(1).
\end{align*}
From the elementary inequality 
\begin{equation}\label{YU:ineq}
|x|\leq \frac12 \left(C+\frac{|x|^2}{C}\right),
\end{equation}
for any positive constant $C$ and real number $x$, we get
\begin{align*}
&\left|\frac{1}{\sqrt{T_n}}\sum_{j\in\hat{\mcj}_n^{k_n}}\frac{\intj (a_s-a_{j-1})dw_s}{\mbba_{j-1}^\top\aes^{k_n}}\mbbb_{j-1}\right|\\
&\lesssim 
\frac{1}{\sqrt{T_n}}\sum_{j\in\hat{\mcj}_n^{k_n}}
\left|\intj (a_s-a_{j-1})dw_s\mbbb_{j-1}\right|
\nn\\
&\lesssim\frac{1}{\sqrt{T_n}}\sum_{j\in\hat{\mcj}_n^{k_n}}\left\{\frac{\sqrt{T_n}}{k_n(\log n)^2}+\frac{k_n(\log n)^2}{\sqrt{T_n}}\left|\intj (a_s-a_{j-1})dw_s\mbbb_{j-1}\right|^2\right\}\\
&\lesssim
\frac{1}{(\log n)^2}
+\frac{1}{n}\sumj \left|\frac{1}{h_n}\intj (a_s-a_{j-1})dw_s\mbbb_{j-1}\right|^2 nh_n^2(\log n)^2 \\
&\lesssim \frac{1}{(\log n)^2}+O_p\left(nh_n^2(\log n)^2\right)=o_p(1).
\end{align*}
Here we used the condition $k_n\le (\lam+1)T_n-1$ and Burkholder's inequality.
By means of It\^{o}'s formula as in \eqref{YU:supesti} under Assumption \ref{Moments}, we obtain for any $q>2$
\begin{equation*}
E\left[\sup_{0\leq t\leq T_n} |X_t|^q\right]=O\left(T_n\right).
\end{equation*}
This combined with Jensen's inequality shows that
\begin{equation}\label{YU:supesti2}
E\left[\sup_{0\leq t\leq T_n} |X_t|^r \right]=O(T_n^\ep)
\end{equation}
for any $\ep>0$ and $r>0$.
Now, with the $\del\in(0,1)$ given in \eqref{hm:add-4} (a consequence of Assumption \ref{Sampling}), we set $\ep=\frac{\del}{3}$ and $\del'=\frac{4}{3}\del$.
Then, making use of \eqref{YU:supesti2} with an application of \eqref{YU:ineq}, we have
\begin{align*}
&\left|\frac{1}{\sqrt{T_n}}\sum_{j\in\hat{\mcj}_n^{k_n}} \frac{a_{j-1}\D_j w}{\mbba_{j-1}^\top\aes^{k_n}}\mbbb_{j-1}\right|\\
&\lesssim\frac{\max_{1\leq j \leq n}\left|\mbbb_{j-1}\right|}{\sqrt{T_n}}\sum_{j\in\hat{\mcj}_n^{k_n}}\left\{\frac{T_n^{\frac{1-\del'}{2}}}{k_n}+\frac{k_n}{T_n^{\frac{1-\del'}{2}}}(\D_j w)^2\right\}\\
&\lesssim  O_p\left(T_n^{-\frac{\del'}{2}+\ep} \vee T_n^{1+\ep+\frac{\del'}{2}}h_n\log n\right)\\
&=O_p\left(T_n^{-\frac{\del}{3}}\vee n^{1+\del}h_n^{2+\del}\log n\right)=o_p(1),
\end{align*}
thus concluding \eqref{YU:exp2}.

\medskip

\bigskip

\bibliographystyle{abbrv} 

\end{document}